\def \VersionLong {} %
\def \VersionAuthor {}
\ifdefined\VersionAuthor
	\newcommand{\AuthorVersion}[1]{#1}
	\newcommand{\FinalVersion}[1]{}
\else
	\newcommand{\AuthorVersion}[1]{}
	\newcommand{\FinalVersion}[1]{#1}
\fi

\ifdefined \VersionLong
	\newcommand{\LongVersion}[1]{#1}
	\newcommand{\ShortVersion}[1]{}
\else
	\newcommand{\LongVersion}[1]{}
	\newcommand{\ShortVersion}[1]{#1}
\fi

\documentclass[conference]{IEEEtran}
\IEEEoverridecommandlockouts
\usepackage{amsthm}
\theoremstyle{plain}

\newtheorem{theorem}{Theorem}
\newtheorem{corollary}{Corollary}

\theoremstyle{definition}
\newtheorem{definition}{Definition}
\newtheorem{example}{Example}

\theoremstyle{remark}
\newtheorem{remark}{Remark}

\renewenvironment{proof}{\begin{IEEEproof}}{\end{IEEEproof}}
\usepackage[svgnames,table]{xcolor}
\usepackage{amsmath,amssymb}
\usepackage[utf8]{inputenc}
\usepackage[english]{babel}
\usepackage{csquotes}
\usepackage[ruled,vlined,linesnumbered]{algorithm2e}
	\SetKwInOut{Input}{input}
	\SetKwInOut{Output}{output}
\usepackage{subcaption}
\usepackage{paralist} %
\usepackage{adjustbox}
\usepackage{verbatim} %
\usepackage{multirow}

\newenvironment{ienumeration}
	{\ifdefined\VersionLong\begin{enumerate}\else\begin{inparaenum}[\itshape i\upshape)]\fi}
	{\ifdefined\VersionLong\end{enumerate}\else\end{inparaenum}\fi}

\newenvironment{oneenumeration}
	{\ifdefined\VersionLong\begin{enumerate}\else\begin{inparaenum}[1)]\fi}
	{\ifdefined\VersionLong\end{enumerate}\else\end{inparaenum}\fi}

\ifdefined\VersionWithComments
	\usepackage{draftwatermark}
	\SetWatermarkText{draft}
	\SetWatermarkScale{2}
	\SetWatermarkColor[gray]{0.9}
\fi

\definecolor{USPNcobalt}{HTML}{293358}
\definecolor{USPNocre}{HTML}{8b7d6d}
\definecolor{USPNblanc}{HTML}{ffffff}
\definecolor{USPNceruleen}{HTML}{354878}
\definecolor{USPNsable}{HTML}{ad947e}

\usepackage[
		pdfauthor={Étienne André, Engel Lefaucheux and Dylan Marinho},%
		pdftitle={Expiring opacity problems in parametric timed automata},
		breaklinks  = true,
		colorlinks  = true,
		citecolor   = USPNsable,
		linkcolor   = USPNocre,
		urlcolor    = USPNceruleen,
	]{hyperref}
\usepackage[capitalise,english,nameinlink]{cleveref} %
\crefname{line}{\text{line}}{\text{lines}} %

\newcommand{\defProblem}[3]
{%
	\noindent\fcolorbox{black}{USPNsable!15}{
	\begin{minipage}{.95\columnwidth}
		\textbf{The #1\ifdefined\VersionAuthor{} problem\fi:}\\
		\textsc{Input}: #2\\
		\textsc{Problem}: #3
	\end{minipage}
	}
	\smallskip
}

\usepackage{tikz}
\usetikzlibrary{arrows,automata,positioning,shapes} %

\tikzstyle{pta}=[auto, ->, >=stealth']
\tikzstyle{every node}=[initial text=]
\tikzstyle{location}=[rectangle, rounded corners, minimum size=12pt, draw=black, fill=blue!10, inner sep=2pt]
\tikzstyle{invariant}=[draw=black, dotted, inner sep=1pt, node distance=0] %
\tikzstyle{final}=[double, fill=blue!50]

\tikzstyle{urgent}=[fill=yellow, thick, dotted] %
\tikzstyle{private}=[fill=red,thick]

\definecolor{coloract}{rgb}{0.50, 0.70, 0.30}
\definecolor{colorclock}{rgb}{0.4, 0.4, 1}
\definecolor{colordisc}{rgb}{1, 0, 1}
\definecolor{colorloc}{rgb}{0.4, 0.4, 0.65}
\definecolor{colorparam}{rgb}{1, 0.6, 0.0}

\definecolor{loccolor1}{rgb}{1, 0.3, 0.3}
\definecolor{loccolor2}{rgb}{0.3, 1, 0.3}
\definecolor{loccolor3}{rgb}{0.3, 0.3, 1}
\definecolor{loccolor4}{rgb}{1, 0.3, 1}
\definecolor{loccolor5}{rgb}{1, 1, 0.3}
\definecolor{loccolor6}{rgb}{0.3, 1, 1}
\definecolor{loccolor7}{rgb}{0.9, 0.6, 0.2}
\definecolor{loccolor8}{rgb}{0.7, 0.4, 1}
\definecolor{loccolor9}{rgb}{0.5, 1, 0.75}
\definecolor{loccolor10}{rgb}{0.8, 0.7, 0.6}
\definecolor{loccolor11}{rgb}{0.6, 0.7, 0.8}
\definecolor{loccolor12}{rgb}{0.2, 0.5, 0.9}
\definecolor{loccolor13}{rgb}{0.5, 0.9, 0.2}
\definecolor{loccolor14}{rgb}{0.9, 0.2, 0.5}
\definecolor{loccolor15}{rgb}{0.7, 0.7, 0.7}
\definecolor{loccolor16}{rgb}{0.8, 0.8, 0.5}

\newcommand{\styleclock}[1]{\ensuremath{\textcolor{colorclock}{{#1}}}}

\newcommand{\styleparam}[1]{\ensuremath{\textcolor{colorparam}{{#1}}}}
\newcommand{\cellHeader}[0]{\cellcolor{blue!20}\bfseries}

\newcommand{\cellYes}{\cellcolor{green!20}\textbf{$\surd$}}
\newcommand{\cellNo}{\cellcolor{red!20}\textbf{$\times$}}
\newcommand{\cellOpen}{\cellcolor{yellow!20}\textbf{$?$}}
\newcommand{\cellCref}[1]{{\scriptsize (\cref{#1})}}

\newcommand{\A}{\ensuremath{\mathcal{P}}}
\newcommand{\RA}[1]{\ensuremath{\mathcal{RA}_{#1}}}
\newcommand{\assign}{\leftarrow}

\newcommand{\class}[1]{\ensuremath{\left[#1\right]}}
\newcommand{\compOp}{\bowtie}

\newcommand{\init}{_0}
\newcommand{\integralp}[1]{\ensuremath{\lfloor#1\rfloor}}
\newcommand{\equivalent}{\ensuremath{\approx}}
\newcommand{\fract}[1]{\ensuremath{\text{fract}(#1)}}
\newcommand{\longuefleche}[1]{\stackrel{#1}{\longrightarrow}}
\newcommand{\longueflecheRel}[1]{\stackrel{#1}{\mapsto}}

\newcommand{\flecheRel}{{\rightarrow}}

\newcommand{\set}[1]{\ensuremath{\left\{#1\right\}}}
\newcommand{\Set}[2]{\ensuremath{\left\{#1 \ \vert\ #2 \right\}}}
\newcommand{\TA}{\ensuremath{\mathcal{A}}}

\newcommand{\Actions}{\ensuremath{\Sigma}}
\newcommand{\action}{\ensuremath{a}}
\newcommand{\actiontick}{\ensuremath{\mathsf{tick}}}

\newcommand{\Clock}{\mathbb{X}} %
\newcommand{\ClockCard}{H} %
\newcommand{\clock}{x} %
\newcommand{\clockval}{\mu} %
\newcommand{\ClocksZero}{\vec{0}}
\newcommand{\sclockx}{\ensuremath{\styleclock{x}}}

\newcommand{\duration}{\ensuremath{\mathit{dur}}}

\newcommand{\edge}{e}
\newcommand{\Edges}{E}
\newcommand{\extraAction}{\ensuremath{\sharp}}

\newcommand{\guard}{g}
\newcommand{\invariant}{I}

\newcommand{\loc}{\ensuremath{\ell}} %
\newcommand{\locinit}{\loc\init}
\newcommand{\Loc}{L} %
\newcommand{\locfinal}{\ensuremath{\loc_f}}
\newcommand{\locpriv}{\ensuremath{\loc_{\mathit{priv}}}}

\newcommand{\TTS}{\ensuremath{\mathcal{T}}}

\newcommand{\varrun}{\rho} %

\newcommand{\PTA}{\A}
\newcommand{\PTAextend}{\ensuremath{(\Actions, \Loc, \locinit, \locpriv, \locfinal, \Clock, \Param, \invariant, \Edges)}}

\newcommand{\Param}{\mathbb{P}} %
\newcommand{\param}{p} %
\newcommand{\parami}[1]{p_{#1}} %
\newcommand{\ParamCard}{M} %
\newcommand{\pval}{v} %

\newcommand{\concstate}{\ensuremath{s}} %
\newcommand{\state}{\ensuremath{s}}
\newcommand{\s}{\ensuremath{s}}
\newcommand{\States}{\ensuremath{\mathbb{S}}} %

\newcommand{\region}{\ensuremath{r}}
\newcommand{\Regions}{\ensuremath{{\mathcal{R}}}}
\newcommand{\regionEdges}{\ensuremath{\mathcal{F}}}
\newcommand{\RegionGraph}{\ensuremath{\mathcal{RG}}}

\newcommand{\resets}{R}

\newcommand{\reset}[2]{\ensuremath{[#1]_{#2}}}
\newcommand{\valuate}[2]{\ensuremath{#2(#1)}}
\newcommand{\setN}{\ensuremath{\mathbb N}}
\newcommand{\setNinf}{\ensuremath{\mathbb{N}^{\infty}}}
\newcommand{\setQ}{\ensuremath{{\mathbb Q}}}
\newcommand{\setQplus}{\ensuremath{\setQ_{+}}} %
\newcommand{\setR}{\ensuremath{\mathbb R}}
\newcommand{\setRplusinf}{\ensuremath{\setR^{\infty}_+}}
\newcommand{\setRgeqzero}{\ensuremath{\setR_+}}
\newcommand{\setZ}{\ensuremath{\mathbb Z}}

\newcommand{\expiringBound}{\ensuremath{\Delta}}
\newcommand{\setBound}{\ensuremath{\mathcal{D}}} %
\newcommand{\boundReach}{\ensuremath{T}}

\newcommand{\runDuration}[1]{\ensuremath{\duration(#1)}}
\newcommand{\statesDuration}[3]{\ensuremath{\duration_{#1}(#2,#3)}}
\newcommand{\locationsDuration}[3]{\ensuremath{\duration_{#1}(#2,#3)}}

\newcommand{\PrivDurReach}[1]{\ensuremath{\mathit{DVisit}^\mathit{priv}(#1)}}
\newcommand{\PubDurReach}[1]{\ensuremath{D\overline{\mathit{Visit}}^{\mathit{priv}}(#1)}}

\newcommand{\PrivReach}[1]{\ensuremath{\mathit{Visit}^{\mathit{priv}}(#1)}}
\newcommand{\PubReach}[1]{\ensuremath{\overline{\mathit{Visit}}^{\mathit{priv}}(#1)}}

\newcommand{\TempSupPrivDurReach}[2]{\ensuremath{\mathit{DVisit}^\mathit{priv}_{>#2}(#1)}}
\newcommand{\TempInfPrivDurReach}[2]{\ensuremath{\mathit{DVisit}^\mathit{priv}_{\leq#2}(#1)}}
\newcommand{\TempSupPrivReach}[2]{\ensuremath{\mathit{Visit}^{\mathit{priv}}_{>#2}(#1)}}
\newcommand{\TempInfPrivReach}[2]{\ensuremath{\mathit{Visit}^{\mathit{priv}}_{\leq#2}(#1)}}

\newcommand{\ComplexityFont}[1]{{\sffamily\upshape #1}}

\newcommand{\TEXPTIME}{\ComplexityFont{3EXPTIME}\xspace}

\newcommand{\NP}{\ComplexityFont{NP}\xspace}

\newcommand{\NEXPTIME}{\ComplexityFont{NEXPTIME}\xspace}

\newcommand{\imitator}{\textsf{IMITATOR}}
\newcommand{\SpaceEx}{\textsc{SpaceEx}}

\newcommand{\tempOpaqueText}[1]{(\ensuremath{\leq}~\ensuremath{#1})-ET-opaque}
\newcommand{\tempOpacityText}[1]{(\ensuremath{\leq}~\ensuremath{#1})-ET-opacity}
\newcommand{\weakTempOpaqueText}[1]{weakly \tempOpaqueText{#1}}
\newcommand{\weakTempOpacityText}[1]{weak \tempOpacityText{#1}}
\newcommand{\fullTempOpaqueText}[1]{fully \tempOpaqueText{#1}}
\newcommand{\fullTempOpacityText}[1]{full \tempOpacityText{#1}}
\newcommand{\weakfullTempOpaqueText}[1]{fully (resp.\ weakly) \tempOpaqueText{#1}}
\newcommand{\weakfullTempOpacityText}[1]{full (resp.\ weak) \tempOpacityText{#1}}

\ifdefined \VersionWithComments
	\definecolor{colorok}{RGB}{80,80,150}
\else
	\definecolor{colorok}{RGB}{0,0,0}
\fi

\newcommand{\eg}{\textcolor{colorok}{e.g.,}\xspace} %

\newcommand{\ie}{\textcolor{colorok}{i.e.,}\xspace} %
\newcommand{\st}{\textcolor{colorok}{s.t.}\xspace}

\newcommand{\wrt}{{w.r.t.}\xspace} %

\AuthorVersion{%
	\usepackage[backend=biber,backref=true,style=alphabetic,url=false,doi=true,defernumbers=true,sorting=anyt,maxnames=99]{biblatex} %
	\addbibresource{PTA.bib}

	\renewbibmacro*{doi+eprint+url}{%
		\iftoggle{bbx:doi}
			{\color{black!40}\footnotesize\printfield{doi}}
			{}%
		\newunit\newblock
		\iftoggle{bbx:eprint}
			{\usebibmacro{eprint}}
			{}%
		\newunit\newblock
		\iftoggle{bbx:url}
			{\usebibmacro{url+urldate}}
			{}%
	}

}

\sloppy

\makeatletter
\def\orcidID#1{\smash{\href{https://orcid.org/#1}{\protect\raisebox{-1.25pt}{\protect\includegraphics{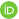}}}}}
\makeatother

\begin{document}

\pagestyle{plain}

\title{Expiring opacity problems in parametric timed automata\thanks{%
	\AuthorVersion{%
		This is the author (and slightly modified) version of the manuscript of the same name published in the proceedings of the 27th International Conference on Engineering of Complex Computer Systems (\href{https://www.irit.fr/iceccs2023/}{ICECCS 2023}).
		The published version is available at \href{https://www.doi.org/10.1109/ICECCS59891.2023.00020}{10.1109/ICECCS59891.2023.00020}.
		Modifications in this manuscript include a few minor modified notations, to be consistent with our invited paper at TiCSA~2023~\cite{ALLMS23} published after this paper.
	}%
	This work is partially supported by the ANR-NRF French-Singaporean research program ProMiS (ANR-19-CE25-0015 / 2019 ANR NRF 0092)
		and
		the ANR research program BisoUS (ANR-22-CE48-0012).
}
}
\author{
	\IEEEauthorblockN{\'Etienne Andr\'e\orcidID{0000-0001-8473-9555}}
	\IEEEauthorblockA{\textit{Université Sorbonne Paris Nord, LIPN, CNRS UMR 7030}\\
		\textit{F-93430 Villetaneuse, France}}
	\and
	\IEEEauthorblockN{Engel Lefaucheux\orcidID{0000-0003-0875-300X}}
	\IEEEauthorblockN{Dylan Marinho\orcidID{0000-0002-2548-6196}}
	\IEEEauthorblockA{\textit{Université de Lorraine, CNRS, Inria, LORIA}\\
		\textit{F-54000 Nancy, France}
	}
}

\maketitle

\begin{abstract}
	Information leakage can have dramatic consequences on the security of real-time systems.
	Timing leaks occur when an attacker is able to infer private behavior depending on timing information.
	In this work, we propose a definition of expiring timed opacity \wrt{} execution time, where a system is opaque whenever the attacker is unable to deduce the reachability of some private state solely based on the execution time; in addition, the secrecy is violated only when the private state was entered ``recently'', \ie{} within a given time bound (or expiration date) prior to system completion.
	This has an interesting parallel with concrete applications, notably cache deducibility: it may be useless for the attacker to know the cache content too late after its observance.
	We study here expiring timed opacity problems in timed automata.
	We consider the set of time bounds (or expiration dates) for which a system is opaque and show when they can be effectively computed for timed automata.
	We then study the decidability of several parameterized problems, when not only the bounds, but also some internal timing constants become timing parameters of unknown constant values.
\end{abstract}

\begin{IEEEkeywords}
	security, distributed systems, timed opacity, timed automata
\end{IEEEkeywords}

\section{Introduction}\label{section:introduction}
Complex timed systems combine hard real-time constraints with concurrency.
Information leakage can have dramatic consequences on the security of such systems.
Among harmful information leaks, the \emph{timing information leakage} is the ability for an attacker to deduce internal information depending on timing information.
In this work, we focus on timing leakage through the total execution time, \ie{} when a system works as an almost black-box, and the ability of the attacker is limited to know the model and observe the total execution time.
We consider the setting of timed automata (TAs), which is a popular extension of finite-state automata with clocks~\cite{AD94}.

\paragraph{Context and related works}
Franck~Cassez proposed in~\cite{Cassez09} a first definition of timed opacity: the system is opaque if an attacker cannot deduce whether some set of actions was performed, by only observing a given set of observable actions together with their timestamp.
It is then proved in~\cite{Cassez09} that it is undecidable whether a TA is opaque, even for the restricted class of event-recording automata~\cite{AFH99} (a subclass of TAs).
This notably relates to the undecidability of timed language inclusion for TAs~\cite{AD94}.
The aforementioned negative result leaves hope only if the definition or the setting is changed, which was done in three main lines of works.
First, in~\cite{WZ18,WZA18}, the input model is simplified to \emph{real-time automata}, a severely restricted formalism compared to TAs.
Timed aspects are only considered by interval restrictions over the total elapsed time along transitions.
Real-time automata can be seen as a subclass of TAs with a single clock, reset at each transition.
In this setting, (initial-state) opacity becomes decidable~\cite{WZ18,WZA18}.

Second, in~\cite{ALMS22}, we consider a less powerful attacker, who has access only to the \emph{execution time}: this is \emph{execution-time opacity (ET-opacity)}.\footnote{%
	In~\cite{ALMS22}, this notion was only referred to as ``timed opacity''.
}
In the setting of TAs, the execution time denotes the time from the system start to the reachability of a given (final) location.
Therefore, given a secret location, a TA is ET-opaque for an execution time~$d$ if there exist at least two runs of duration~$d$ from the initial location to a final location: one visiting the secret location, and another one \emph{not} visiting the secret location.
Deciding whether at least one such~$d$ exists can be seen as an \emph{existential} version of ET-opacity---which we do not consider here. %
Then, the system is \emph{fully ET-opaque}
	if it is ET-opaque \emph{for all execution times}: that is, for each possible~$d$, either the final location is not reachable at all, or the final location is reachable for at least two runs, one visiting the secret location, and another one not visiting it.
These two definitions of ET-opacity are shown to be decidable for TAs~\cite{ALMS22}.
We also studied various parametric extensions, and notably showed that the parametric emptiness problem (the emptiness over the parameter valuations set for which the TA is existentially-ET-opaque) becomes decidable for a subclass of parametric timed automata (PTAs)~\cite{AHV93}, called L/U-PTAs~\cite{HRSV02}, where parameters are partitioned between lower-bound and upper-bound parameters.

Third, in~\cite{AEYM21}, the authors consider a time-bounded notion of the opacity of~\cite{Cassez09}, where the attacker has to disclose the secret before an upper bound, using a partial observability.
This can be seen as a secrecy with an \emph{expiration date}.
The rationale is that retrieving a secret ``too late'' is useless; this is understandable, \eg{} when the secret is the value in a cache; if the cache was overwritten since, then knowing the secret is probably useless in most situations.
In addition, the analysis is carried over a time-bounded horizon; this means there are two time bounds in~\cite{AEYM21}: one for the secret expiration date, and one for the bounded-time execution of the system.
(We consider only the former one in this work, and lift the assumption regarding the latter.)
The authors prove that deciding whether a system is time-bounded opaque under a bounded time horizon, with a notion close to our weakness definition (unidirectional language inclusion), is decidable for~TAs. %
A construction and an algorithm are also provided to solve it; a case study is verified using \SpaceEx{}~\cite{FLDCRLRGDM11}.

In an orthogonal line of works, \cite{BT03,AK20} consider \emph{non-interference} for (parametric) TAs, allowing to quantify the frequency of an attack; this can be seen as a measure of the strength of an attack, depending on the frequency of the admissible actions.
Also see \cite{AA23survey} for a survey on security problems in timed automata.

\paragraph{Contribution}
In this work, we consider an \emph{expiring version of ET-opacity}, where the secret is subject to an expiration date;
this can be seen as a combination of both concepts from~\cite{ALMS22,AEYM21}.
That is, we consider that an attack is successful only when the attacker can decide that the secret location was entered less than $\expiringBound$ time units before the system completion.
Conversely, if the attacker exhibits an execution time~$d$ for which it is certain that the secret location was visited, but this location was entered strictly more than~$\expiringBound$ time units prior to the system completion, then this attack is useless, and can be seen as a failed attack.
The system is therefore fully expiring ET-opaque if the set of execution times for which the private location was entered within $\expiringBound$ time units prior to system completion is exactly equal to the set of execution times for which the private location was either not visited or entered $> \expiringBound$ time units prior to system completion.

In addition, when the former (secret) set of times is \emph{included} into the latter (non-secret) set of times, we say that the system is \emph{weakly} expiring ET-opaque; this encodes situations when the attacker might be able to deduce that no secret location was visited, but is not able to confirm that the secret location \emph{was} indeed visited.

On the one hand, our attacker model is \emph{less powerful} than~\cite{AEYM21}, because our attacker has only access to the execution time (and to the input model); in that sense, our attacker capability is identical to~\cite{ALMS22}.
On the other hand, we lift the time-bounded horizon analysis from~\cite{AEYM21}, allowing to analyze systems without any assumption on their execution time; therefore, we only import from~\cite{AEYM21} the notion of \emph{expiring secret}.
Also note that our formalism is much more expressive (and therefore able to encode richer applications) than in~\cite{WZ18,WZA18} because we consider the full class of TAs instead of the restricted real-time automata.
We also consider parametric extensions, not discussed in~\cite{Cassez09,WZ18,WZA18,AEYM21}.

We first consider ET-opacity for TAs.
We show that it is possible to:
\begin{oneenumeration}%
	\item decide whether a TA is fully (resp.\ weakly) expiring ET-opaque for a given time bound~$\expiringBound$ (decision problem);
	\item decide whether a TA is fully (resp.\ weakly) expiring ET-opaque for at least one bound~$\expiringBound$ (emptiness problem);
	\item compute the set of time bounds (or expiration dates) for which a TA is weakly expiring ET-opaque (computation problem).
\end{oneenumeration}%

Second, we show that, in PTAs, the emptiness of the parameter valuation sets for which the system is fully (resp.\ weakly) expiring ET-opaque is undecidable, even for the L/U-PTA subclass of PTAs, so far known for its decidability results.

\paragraph{Outline}
We recall preliminaries in \cref{section:preliminaries}.
We define expiring opacity problems in \cref{section:problems}.
We address problems for TAs in \cref{section:TAs}, and parametric extensions in \cref{section:PTAs}.
We conclude in \cref{section:conclusion}.

\section{Preliminaries}\label{section:preliminaries}

Let $\setN$, $\setZ$, $\setQplus$, $\setRgeqzero$, $\setR$ denote the sets of non-negative integers, integers, non-negative rational numbers, non-negative real numbers, and real numbers, respectively.
Let $\setNinf = \setN \cup \{+\infty\}$ and
$\setRplusinf = \setRgeqzero \cup \{+\infty\}$.

\AuthorVersion{%
 \subsection{Clocks and guards}\label{ss:clocks}
}

We assume a set~$\Clock = \{ \clock_1, \dots, \clock_\ClockCard \} $ of \emph{clocks}, \ie{} real-valued variables that all evolve over time at the same rate.
A clock valuation is a function
$\clockval : \Clock \rightarrow \setRgeqzero$.
We write $\ClocksZero$ for the clock valuation assigning $0$ to all clocks.
Given $d \in \setRgeqzero$, $\clockval + d$ denotes the valuation \st{} $(\clockval + d)(\clock) = \clockval(\clock) + d$, for all $\clock \in \Clock$.
Given $\resets \subseteq \Clock$, we define the \emph{reset} of a valuation~$\clockval$, denoted by $\reset{\clockval}{\resets}$, as follows: $\reset{\clockval}{\resets}(\clock) = 0$ if $\clock \in \resets$, and $\reset{\clockval}{\resets}(\clock)=\clockval(\clock)$ otherwise.

We assume a set~$\Param = \{ \param_1, \dots, \param_\ParamCard \} $ of \emph{parameters}, \ie{} unknown constants.
A parameter {\em valuation} $\pval$ is a function
$\pval : \Param \rightarrow \setQplus$.

A \emph{clock guard}~$\guard$ is a constraint over $\Clock \cup \Param$ defined by a conjunction of inequalities of the form
$\clock \compOp \sum_{1 \leq i \leq \ParamCard} \alpha_i \param_i + d$
	with $\clock \in \Clock$, $\param_i \in \Param$, $\alpha_i , d \in \setZ$
	and ${\compOp} \in \{<, \leq, =, \geq, >\}$.
Given~$\guard$, we write~$\clockval \models \valuate{\guard}{\pval}$ if %
the expression obtained by replacing each~$\clock$ with~$\clockval(\clock)$ and each~$\param$ with~$\pval(\param)$ in~$\guard$ evaluates to true. %

\LongVersion{%
\subsection{Parametric timed automata}
}

Parametric timed automata (PTAs) extend timed automata with parameters within guards and invariants in place of integer constants~\cite{AHV93}.
We extend PTAs with a special location called ``private location''.

\begin{definition}[PTA]\label{def:uPTA}
	A PTA $\PTA$ is a tuple \mbox{$\A = \PTAextend$}, where:
	\begin{ienumeration}
		\item $\Actions$ is a finite set of actions,
		\item $\Loc$ is a finite set of locations,
		\item $\locinit \in \Loc$ is the initial location,
		\item $\locpriv \in \Loc$ is the private location,
		\item $\locfinal \in \Loc$ is the final location,
		\item $\Clock$ is a finite set of clocks,
		\item $\Param$ is a finite set of parameters,
		\item $\invariant$ is the invariant, assigning to every $\loc\in \Loc$ a clock guard $\invariant(\loc)$,
		\item $\Edges$ is a finite set of edges  $\edge = (\loc,\guard,\action,\resets,\loc')$
		where~$\loc,\loc'\in \Loc$ are the source and target locations, $\action \in \Actions$, $\resets\subseteq \Clock$ is a set of clocks to be reset, and $\guard$ is a clock guard.
	\end{ienumeration}
\end{definition}

\LongVersion{%

 \subsection{Timed automata}\label{sec:TA-definition}
}

Given a PTA~$\A$ and\LongVersion{ a parameter valuation}~$\pval$, we denote by $\valuate{\A}{\pval}$ the non-parametric structure where all occurrences of a parameter~$\param_i$ have been replaced by~$\pval(\param_i)$.
\LongVersion{%
	We denote as a \emph{timed automaton} any structure $\valuate{\A}{\pval}$, by assuming a rescaling of the constants: by multiplying all constants in $\valuate{\A}{\pval}$ by the least common multiple of their denominators, we obtain an equivalent (integer-valued) TA\LongVersion{, as defined in~\cite{AD94}}.
}\ShortVersion{$\valuate{\A}{\pval}$ is a \emph{timed automaton}~\cite{AD94}.}

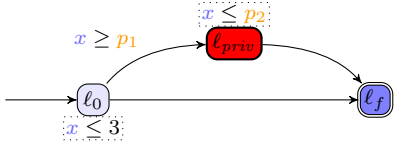
\begin{figure}[tb]

	\centering
	\footnotesize

	\begin{tikzpicture}[pta, scale=1, xscale=2.5, yscale=1.5]

		\node[location, initial] at (0, -.5) (s0) {$\loc_0$};

		\node[location, private] at (.75, 0) (s2) {$\locpriv$};

		\node[location, final] at (1.5, -.5) (s1) {$\locfinal$};

		\node[invariant, below=of s0] {$\sclockx \leq 3$};
		\node[invariant, above=of s2] {$\sclockx \leq \styleparam{\param_2}$};

		\path (s0) edge[bend left] node[align=center]{$\sclockx \geq \styleparam{\param_1}$} (s2);
		\path (s0) edge[] (s1);
		\path (s2) edge[bend left] node[align=center]{} (s1);

	\end{tikzpicture}
	\caption{A PTA example}
	\label{figure:example-PTA}

\end{figure}
\begin{example}\label{example:TA}
	Consider the PTA in \cref{figure:example-PTA} (inspired by \cite[Fig.~1b]{GMR07}), using one clock~$\clock$ and two parameters~$\param_1$ and~$\param_2$.
	$\loc_0$~is the initial location, while~$\locfinal$ is the final location.
\end{example}

\AuthorVersion{%
\subsubsection{Concrete semantics of TAs}
}
\begin{definition}[Semantics of a TA]
	Given a PTA $\PTA = \PTAextend$,
	and a parameter valuation~\(\pval\),
	the semantics $\TTS_{\valuate{\A}{\pval}}$ of $\valuate{\A}{\pval}$ is given by the timed transition system (TTS) $(\States, \s\init, \flecheRel)$, with
	\begin{itemize}
		\item $\States = \{ (\loc, \clockval) \in \Loc \times \setRgeqzero^\ClockCard \mid \clockval \models \valuate{\invariant(\loc)}{\pval} \}$, %
		\LongVersion{\item} $\s\init = (\locinit, \ClocksZero) $,
		\item  $\flecheRel$ consists of the discrete and (continuous) delay transition relations:
		\begin{ienumeration}
			\item discrete transitions: $(\loc,\clockval) \longueflecheRel{\edge} (\loc',\clockval')$, %
			if $(\loc, \clockval) , (\loc',\clockval') \in \States$, and there exists $\edge = (\loc,\guard,\action,\resets,\loc') \in \Edges$, such that $\clockval'= \reset{\clockval}{\resets}$, and $\clockval \models \valuate{\guard}{\pval}$.
			\item delay transitions: $(\loc,\clockval) \longueflecheRel{d} (\loc, \clockval+d)$, with $d \in \setRgeqzero$, if $\forall d' \in [0, d], (\loc, \clockval+d') \in \States$.
		\end{ienumeration}
	\end{itemize}
\end{definition}

Moreover we write $(\loc, \clockval)\longuefleche{(d, \edge)} (\loc',\clockval')$ for a combination of a delay and a discrete transition if
$\exists  \clockval'' :  (\loc,\clockval) \longueflecheRel{d} (\loc,\clockval'') \longueflecheRel{\edge} (\loc',\clockval')$.

Given a TA~$\valuate{\A}{\pval}$ with concrete semantics $(\States, \s\init, \flecheRel)$, we refer to the states of~$\States$ as the \emph{concrete states} of~$\valuate{\A}{\pval}$.
A \emph{run} of~$\valuate{\A}{\pval}$ is an alternating sequence of concrete states of $\valuate{\A}{\pval}$ and pairs of edges and delays starting from the initial state $\s\init$ of the form
$\state_0, (d_0, \edge_0), \state_1, \cdots$
with
$\dots$, $\edge_i \in \Edges$, $d_i \in \setRgeqzero$ and
$\state_i \longuefleche{(d_i, \edge_i)} \state_{i+1}$ for $i = 1, 2, \cdots$.
The \emph{duration} between two states of a finite run $\varrun : \state_0, (d_0, \edge_0), \state_1, \cdots, \state_k$ is $\statesDuration{\varrun}{\state_i}{\state_j} = \sum_{i \leq m \leq j-1} d_m$.
The \emph{duration} of a finite run $\varrun : \state_0, (d_0, \edge_0), \state_1, \cdots, \state_k$ is $\runDuration{\varrun} = \statesDuration{\varrun}{\state_0}{\state_k} = \sum_{0 \leq m \leq k-1} d_m$.
We also define the \emph{duration} between two locations $\loc_1$ and $\loc_2$ as the duration $\locationsDuration{\varrun}{\loc_1}{\loc_2} = \statesDuration{\varrun}{\state_i}{\state_j}$ with
$\varrun : \state_0, (d_0, \edge_0), \state_1, \cdots, \state_i, \cdots, \state_j, \cdots, \state_k$ where
$\state_j$ the first occurrence of a state with location $\loc_2$
and
$\state_i$ is the last state of~$\varrun$ with location $\loc_1$ before~$\state_j$.
We choose this definition to coincide with the definitions of opacity that we will define later (\cref{definition:temporary-timed-opacity}). Indeed, we want to make sure that revealing a secret ($\loc_1$ in this definition) is not a failure if it is done after a given time. Thus, as soon as the system reaches its final state ($\loc_2$), we will be interested in knowing how long the secret has been present, and thus the last time it was entered~($\state_i$).

\begin{example}\label{example:TA:runs}
	Consider again the PTA in \cref{figure:example-PTA}.
	Let $\pval$ be such that $\pval(\param_1) = 1$ and $\pval(\param_2) = 2$.
	Consider the following run~$\varrun$ of $\valuate{\PTA}{\pval}$:
	$(\loc_0, \clock = 0) , (1.4, \edge_2) , (\locpriv, \clock = 1.4) , (0.4, \edge_3) , (\locfinal, \clock = 1.8)$, where
	$\edge_2$ is the edge from $\loc_0$ to~$\locpriv$ in \cref{figure:example-PTA},
	and
	$\edge_3$ is the edge from $\locpriv$ to~$\locfinal$.
	We write ``$\clock = 1.4$'' instead of ``$\clockval$ such that $\clockval(\clock) = 1.4$''.
	We have $\duration(\varrun) = 1.4 + 0.4 = 1.8$ and $\locationsDuration{\varrun}{\locpriv}{\locfinal} = 0.4$.
\end{example}

\AuthorVersion{
\subsubsection{Timed automata regions}\label{sss:regions}
}

Let us now recall the concept of \AuthorVersion{regions and }the region graph~\cite{AD94}.

Given a TA~$\TA$, for a clock $\clock_i$, we denote by $c_i$ the largest constant to which $\clock_i$ is compared within the guards and invariants of~$\TA$ (that is, $c_i = \max_{i}( \{\ d_i \mid \clock \compOp d_i \text{~appears in a guard or invariant of~}\TA \}$).
Given $\alpha\in\setR$, let $\integralp{\alpha}$ and $\fract{\alpha}$ denote respectively the integral part and the fractional part of~$\alpha$.

\begin{example}\label{example:largestConstant}
	Consider again the PTA in \cref{figure:example-PTA}, and let $\pval$ be such that $\pval(\param_1) = 2$ and $\pval(\param_2) = 4$.
	In the TA $\valuate{\A}{\pval}$, the clock $\clock$ is compared to the constants in $\set{2, 3, 4}$.
	In that case, $c = 4$ is the largest constant to which the clock $\clock$ is compared.
\end{example}
\begin{definition}[Region equivalence]\label{definition:region-equivalence}
	We say that two clock valuations $\clockval$ and $\clockval'$ are equivalent,
	denoted $\clockval \equivalent \clockval'$,
	if the following three conditions hold for any pair of clocks $x_i,x_j$:
	\begin{enumerate}
		\item $\integralp{\clockval(\clock_i)} = \integralp{\clockval'(\clock_i)}$
			or $\big( \clockval(\clock_i)>c_i \text{ and } \clockval'(\clock_i)>c_i \big)$;
		\item $\fract{\clockval(\clock_i)}\leq\fract{\clockval(\clock_j)}$ iff $\fract{\clockval'(\clock_i)}\leq\fract{\clockval'(\clock_j)}$; and
		\item $\fract{\clockval(\clock_i)}=0$ iff $\fract{\clockval'(\clock_i)}=0$.
	\end{enumerate}
\end{definition}
The equivalence relation $\equivalent$ is extended to the states of $\TTS_{\TA}$: if ${\concstate = (\loc,\clockval)}, {\concstate' = (\loc',\clockval')}$ are two states of $\TTS_{\TA}$, we write ${\concstate\equivalent \concstate'}$ iff ${\loc=\loc'}$ and ${\clockval\equivalent \clockval'}$.

We denote by $\class{\concstate}$ the equivalence class of $\concstate$ for $\equivalent$. %
A \emph{region} is an equivalence class $\class{\concstate}$ of $\equivalent$.
The set of all regions is denoted by~$\Regions_{\TA}$.
Given a state $\concstate = (\loc, \clockval)$ and $d \geq 0$, we write $\concstate + d$ to denote $(\loc, \clockval + d)$.%

\begin{definition}[Region graph \cite{BDR08}]\label{definition:region-graph}
	The \emph{region graph} ${\RegionGraph_{\TA} = (\Regions_{\TA},\regionEdges_{\TA})}$ is a finite graph with:
	\begin{itemize}
		\item $\Regions_{\TA}$ as the set of vertices
		\item given two regions $\region=\class{\concstate},\region'=\class{\concstate'}\in\Regions_{\TA}$, we have $(\region,\region')\in\regionEdges_{\TA}$ if one of the following holds:
		\begin{itemize}
			\item $\concstate \longueflecheRel{\edge} \concstate' \in \TTS_{\TA}$ for some $\edge \in \Edges$ (\emph{discrete \AuthorVersion{instantaneous }transition});

			\item $\region'$ is a time successor of $\region$, \ie{} $r\neq r'$ and there exists $d$ such that $\concstate + d\in \region'$ and ${\forall d'<d, \concstate+d'\in \region\cup\region'}$  (\emph{delay transition});

			\item $\region=\region'$ is unbounded, \ie{} $\concstate = (\loc,\clockval)$ with $\clockval(\clock_i)>c_i$ for all~$x_i$  (\emph{equivalent unbounded regions}).
		\end{itemize}
	\end{itemize}
\end{definition}

We now define a version of the region automaton based on~\cite{BDR08} where the only letter that can be read (``$\actiontick$'') means that one time unit has passed.
Note that this automaton is not timed.
As such, it is as usual described by a tuple $(\Sigma, Q, q_0,F,T)$ where $\Sigma$ is the alphabet, $Q$ is the set of states, $q_0$ is the initial state, $F$ is the set of final states and $T\in (Q\times \Sigma \times Q)$ is the set of transitions.

We assume that the original TA~$\TA$ possesses a special clock~$\clock_\actiontick$ that is always reset every 1 time unit (through appropriate invariants and resets).
This clock does not affect the behavior of the TA, but every time it is reset, we know that one unit of time passed.
We also assume that the TA is deadlocked once~$\locfinal$ is reached (\ie{} no transition can be taken and no time can elapse).

\begin{definition}[Region automaton~\cite{BDR08}]\label{definition:region-autmaton}
	The \emph{region automaton} of a TA~$\TA$ is $\RA{\TA} = \set{\set{\actiontick}, \Regions_{\TA}, \class{\state\init}, F, T}$
	where
	\begin{ienumeration}
		\item $\actiontick$ is the only action;
		\item $\Regions_{\TA}$ is the set of states (a state of $\RA{\TA}$ is a region of~$\TA$);
		\item $\class{\state\init}$ is the initial state (the region associated to the initial location of~$\TA$);
		\item the set of final states $F$ is the set of regions associated to the location $\locpriv$ where $\clock_\actiontick$ is not equal to~$1$ (\ie{} the set of regions $\region=\class{(\locpriv,\clockval)}$ where $\valuate{\clock_\actiontick}{\clockval}< 1$)
		\item $(\region, \action ,\region')\in T$ iff $(\region,\region')\in \regionEdges_{\TA}$ and $\action = \actiontick$ if $\clock_\actiontick$ was reset in the
		discrete instantaneous transition corresponding to $(\region,\region')$, and $\action = \varepsilon$ otherwise.
	\end{ienumeration}
\end{definition}

An important property of this automaton is that the word $\actiontick^k$ with $k\in \setN$ is accepted by $\RA{\TA}$ iff there exists a run reaching the final location of~$\TA$
 within $[k,k+1)$ time units.

\section{Expiring execution-time opacity problems}\label{section:problems}

\LongVersion{%
In this section, we formally introduce the problems we address in this paper.
In the following, let $\TA$ be a TA.

\subsection{Expiring execution-time opacity}
}

Given a TA~$\TA$ and a run~$\varrun$, we say that $\locpriv$ is \emph{reached on the way to~$\locfinal$ in~$\varrun$} if $\varrun$ is of the form $(\loc_0, \clockval_0), (d_0, \edge_0), (\loc_1, \clockval_1), \cdots, (\loc_m, \clockval_m), (d_m, \edge_m), \cdots (\loc_n, \clockval_n)$ %
for some~$m,n \in \setN$ such that $\loc_m = \locpriv$, $\loc_n = \locfinal$ and $\forall 0 \leq i \leq n-1, \loc_i \neq \locfinal$.
We denote by $\PrivReach{\TA}$ the set of those runs, and refer to them as \emph{private} runs. We denote by $\PrivDurReach{\TA}$ the set of all the durations of these runs.
Conversely, we say that
$\locpriv$ is avoided on the way to~$\locfinal$ in~$\varrun$
if $\varrun$ is of the form $(\loc_0, \clockval_0), (d_0, \edge_0), (\loc_1, \clockval_1), \cdots, (\loc_n, \clockval_n )$ %
with $\loc_n = \locfinal$ and $\forall 0 \leq i < n, \loc_i \notin \{\locpriv,\locfinal\}$.
We denote the set of those runs by~$\PubReach{\TA}$, referring to them as \emph{public} runs,
and by $\PubDurReach{\TA}$ the set of all the durations of these public runs. %

Given $\expiringBound \in \setRplusinf$, we define $\TempSupPrivReach{\TA}{\expiringBound}$ (resp.\ $\TempInfPrivReach{\TA}{\expiringBound}$)
 as the set of runs $\varrun\in\PrivReach{\TA}$
 \st{} $\locationsDuration{\varrun}{\locpriv}{\locfinal}>\expiringBound$ (resp.\ $\locationsDuration{\varrun}{\locpriv}{\locfinal}\leq\expiringBound$).
 We refer to the runs of $\TempInfPrivReach{\TA}{\expiringBound}$ as \emph{secret} runs.
$\TempSupPrivDurReach{\TA}{\expiringBound}$ (resp.\ $\TempInfPrivDurReach{\TA}{\expiringBound}$) is the set of all the durations of the runs in $\TempSupPrivReach{\TA}{\expiringBound}$ (resp.\ $\TempInfPrivReach{\TA}{\expiringBound}$).

We define below two notions of execution-time opacity \wrt{} a time bound~$\expiringBound$.
We will compare two sets:
\begin{ienumeration}%
	\item the set of execution times for which the private location was entered at most~%
		$\expiringBound$ time units prior to system completion; and
	\item the set of execution times for which either the private location was not visited at all, or it was last entered more than~%
$\expiringBound$ time units prior to system completion (which, in our setting, is somehow similar to \emph{not} visiting the private location, in the sense that entering it ``too early'' is considered of little interest).
\end{ienumeration}%
If both sets match, the system is \fullTempOpaqueText{\expiringBound}.
If the former is included into the latter, then the system is \weakTempOpaqueText{\expiringBound}.

\begin{definition}[\tempOpacityText{\expiringBound}]\label{definition:temporary-timed-opacity}
	Given a TA~$\TA$ and
	a bound (\ie{} an expiration date for the secret)~$\expiringBound\in\setRplusinf$
	we say that  $\TA$ is \emph{\fullTempOpaqueText{\expiringBound}} if
	$\TempInfPrivDurReach{\TA}{\expiringBound} = \TempSupPrivDurReach{\TA}{\expiringBound} \cup \PubDurReach{\TA}$.
	Moreover,
	we say that  $\TA$ is \emph{\weakTempOpaqueText{\expiringBound}} if
	$\TempInfPrivDurReach{\TA}{\expiringBound} \subseteq \TempSupPrivDurReach{\TA}{\expiringBound} \cup \PubDurReach{\TA}$.

\end{definition}

\begin{remark}\label{remark:weak}
Our notion of \emph{weak} opacity may still leak some information:
on the one hand, if a run indeed enters the private location $\leq \expiringBound$ time units before system completion, there exists an equivalent run not visiting it (or entering it earlier), and therefore the system is opaque;
\emph{but} on the other hand, there may exist execution times for which the attacker can deduce that the private location was \emph{not} entered $\leq \expiringBound$ before system completion.
This remains acceptable in some cases, and this motivates us to define a weak version of \tempOpacityText{\expiringBound}.
Also note that the ``initial-state opacity'' for real-time automata considered in~\cite{WZ18} can also be seen as \emph{weak} in the sense that their language inclusion is also unidirectional.
\end{remark}

\begin{example}\label{ex:defOpacity}
	Consider again the PTA in~\cref{figure:example-PTA}; let $\pval$ be such that $\pval(\param_1) = 1$ and $\pval(\param_2) = 2.5$.
	Fix $\expiringBound = 1$.

	We have:
	\begin{itemize}
		\item $\PubDurReach{\valuate{\PTA}{\pval}} = [0, 3] $
		\item $\TempSupPrivDurReach{\valuate{\PTA}{\pval}}{\expiringBound} = (2,2.5] $
		\item $\TempInfPrivDurReach{\valuate{\PTA}{\pval}}{\expiringBound} = [1,2.5] $
	\end{itemize}
	Therefore, we say that $\valuate{\PTA}{\pval}$ is: %
	\begin{itemize}
		\item \weakTempOpaqueText{1}, as $[1,2.5] \subseteq \big( (2,2.5] \cup [0,3] \big)$
		\item not \fullTempOpaqueText{1}, as $[1,2.5] \neq \big( (2,2.5] \cup [0,3] \big)$
	\end{itemize}

	As introduced in~\cref{remark:weak}, despite the \weakTempOpacityText{1} of~$\TA$, the attacker can deduce some information about the visit of the private location for some execution times.
	For example, if a run has a duration of 3 time units, it cannot be a private run, and therefore the attacker can deduce that the private location was not visited at all.
\end{example}

We define three different problems:

\defProblem
{\weakfullTempOpacityText{\expiringBound} decision}
{A TA~$\TA$ and a bound $\expiringBound\in\setRplusinf$}
{Decide whether $\TA$ is \weakfullTempOpaqueText{\expiringBound}}%

\defProblem
{\weakfullTempOpacityText{\expiringBound} emptiness}
{A TA~$\TA$}
{Decide the emptiness of the set of bounds $\expiringBound$ such that $\TA$ is \weakfullTempOpaqueText{\expiringBound}}

\defProblem
{\weakfullTempOpacityText{\expiringBound} computation}
{A TA~$\TA$}
{Compute the maximal set $\setBound$ of bounds such that $\TA$ is \weakfullTempOpaqueText{\expiringBound} for all $\expiringBound\in\setBound$}

\begin{example}\label{ex:opacityProblemsTA}
	Consider again the PTA in~\cref{figure:example-PTA}; let $\pval$ be such that $\pval(\param_1) = 1$ and $\pval(\param_2) = 2.5$ (as in \cref{ex:defOpacity}).
	Given $\expiringBound = 1$, the \weakTempOpacityText{\expiringBound} decision problem asks whether $\valuate{\PTA}{\pval}$ is \weakTempOpaqueText{\expiringBound}---the answer is ``yes'' from \cref{ex:defOpacity}.
	The \weakTempOpacityText{\expiringBound} emptiness problem is therefore ``no'' because the set of bounds $\expiringBound$ such that $\valuate{\PTA}{\pval}$ is \weakTempOpaqueText{\expiringBound} is not empty.
	Finally, the \weakTempOpacityText{\expiringBound} computation problem asks to compute all the corresponding bounds: in this example, the solution is ${\expiringBound\in\setRgeqzero}$.
\end{example}

Note that, when considering $\expiringBound = +\infty$,
$\TempSupPrivDurReach{\TA}{\expiringBound}=\emptyset$ and all the execution times of runs passing by $\locpriv$ are in $\TempInfPrivDurReach{\TA}{\expiringBound}$.
Therefore, \fullTempOpacityText{+\infty} matches the full ET-opacity\footnote{Named ``full timed opacity'' in~\cite{ALMS22}.} defined in~\cite{ALMS22}.
We can therefore notice that answering the \fullTempOpacityText{+\infty} decision problem is decidable (\cite[Proposition 5.3]{ALMS22}). %
However, the emptiness and computation problems cannot be reduced to full ET-opacity problems from~\cite{ALMS22}.
Conversely, it is possible to answer the full ET-opacity decision problem by checking the \fullTempOpacityText{+\infty} decision problem.
Moreover, full ET-opacity computation problem
reduces to \fullTempOpacityText{\expiringBound} computation: if $+\infty \in \setBound$,
we get the answer.

Note that our problems are incomparable to the ones addressed in~\cite{AEYM21} as the models used in their paper have a bounded execution time $<+\infty$, in addition to the bounded opacity~$\expiringBound$.

\section{Expiring execution-time opacity in TAs}\label{section:TAs}

\AuthorVersion{%
In this section, we consider the three problems defined previously on TAs.
}%
In general, the link between the full and weak notions of the three aforementioned problems is not obvious.
However, for a fixed value of $\expiringBound$, we establish the following theorem.

\begin{theorem}
\label{th:weaktonorm}
The \fullTempOpacityText{\expiringBound} decision problem reduces to the \weakTempOpacityText{\expiringBound} decision problem.
\end{theorem}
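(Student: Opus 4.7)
The plan is to establish the reduction by decomposing the set equality defining full opacity into two set inclusions and encoding each as a weak opacity query. Setting $X = \TempInfPrivDurReach{\TA}{\expiringBound}$ and $Y = \TempSupPrivDurReach{\TA}{\expiringBound} \cup \PubDurReach{\TA}$, the full decision problem asks whether $X = Y$, i.e.\ whether $X \subseteq Y$ and $Y \subseteq X$. The forward inclusion $X \subseteq Y$ is literally the weak decision problem on input $(\TA,\expiringBound)$, so a first invocation of the weak oracle disposes of it. The remainder of the plan is to encode the reverse inclusion $Y \subseteq X$ as a weak opacity query on an auxiliary TA~$\TA'$ effectively constructed from $\TA$.

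To construct $\TA'$, I would introduce a nondeterministic initial choice that routes every run into one of two ``swapped'' copies of $\TA$, annotated with a fresh clock~$y$ reset on every entry to $\locpriv$ and a boolean latch (realised by duplicating locations) that records whether $\locpriv$ has ever been visited. The first copy admits only the public and late-private runs of $\TA$---enforced by guarding the transition out of the former $\locfinal$ either with ``latch off'' (public case) or with ``latch on and $y > \expiringBound$'' (late-private case)---and then appends a zero-delay suffix through a fresh private location $\locpriv'$ to a fresh final location $\locfinal'$. The second copy symmetrically admits only the secret runs of $\TA$ (``latch on and $y \leq \expiringBound$'') and transitions directly to $\locfinal'$ while bypassing $\locpriv'$. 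Taking $\locpriv'$ as the private location of $\TA'$ and keeping the bound $\expiringBound$, the secret-duration set of $\TA'$ equals $Y$ and the non-secret-duration set equals $X$ (no run of $\TA'$ visits $\locpriv'$ more than $0$ time units before $\locfinal'$, so $\TempSupPrivDurReach{\TA'}{\expiringBound}=\emptyset$). Weak opacity of $\TA'$ is therefore exactly the statement $Y \subseteq X$, and a second oracle call completes the reduction.

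The main obstacle will be faithfully preserving durations between $\TA$ and $\TA'$: the initial nondeterministic branching edge, the terminal $\locpriv' \to \locfinal'$ pair of edges, and the second copy's direct edge to $\locfinal'$ must each consume zero time, lest the durations in $\TA'$ diverge from those of the simulated $\TA$ run. This is handled by a standard urgency gadget, namely a dedicated auxiliary clock reset on the preceding step together with an equality guard $=\!0$ on the urgent edge and a matching invariant forbidding any positive delay on the intermediate locations, which forces these edges to fire with no time elapsing. A careful run-by-run correspondence then confirms that the secret and non-secret duration sets of $\TA'$ coincide with $Y$ and $X$ respectively, so that weak opacity of $\TA'$ captures precisely the missing inclusion $Y \subseteq X$, completing the reduction with two oracle calls.
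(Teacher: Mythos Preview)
Your proposal is correct and follows essentially the same strategy as the paper: split the equality $X=Y$ into the two inclusions, handle $X\subseteq Y$ by a direct weak-opacity call on~$\TA$, and handle $Y\subseteq X$ by a weak-opacity call on an auxiliary automaton that swaps secret and non-secret runs using a fresh clock reset on each entry to~$\locpriv$ together with a zero-time suffix through a fresh private location~$\locpriv'$. The only notable implementation difference is that the paper avoids your boolean latch (and the two-copy split) by initialising the fresh clock to~$\expiringBound+1$ via a timed wait in a new initial location, so that runs never visiting~$\locpriv$ automatically satisfy the $y>\expiringBound$ test at~$\locfinal$; this shifts all durations uniformly by~$\expiringBound+1$, which is harmless for the inclusion check.
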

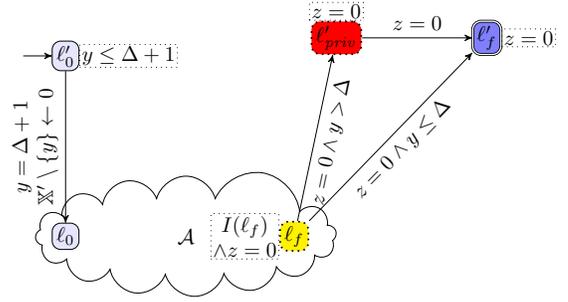
\begin{figure}[tb]
	\centering

	\scalebox{.8}
	{
		\begin{tikzpicture}[pta, node distance=2cm, thin]

			\node[location] (l0) at (-2, 0) {$\locinit$};
			\node[location, urgent] (lf) at (+1.8, 0) {$\locfinal$};
			\node[cloud, cloud puffs=15.7, cloud ignores aspect, minimum width=5cm, minimum height=2cm, align=center, draw] (cloud) at (0cm, 0cm) {$\TA$};

			\node[location, initial] (l0') at (-2, 3) {$\locinit'$};
			\node[location, urgent, private] (lpriv') at (2.5, 3.3) {$\locpriv'$};
			\node[location, final] (lf') at (+5, 3.3) {$\locfinal'$};

			\node[invariant, right=of l0'] {$y\leq \expiringBound + 1$};
			\node[invariant, above=of lpriv'] {$z=0$};
			\node[invariant, left=of lf, align=center] {$\invariant(\locfinal)$\\$\wedge z = 0$};
			\node[invariant, right=of lf'] {$z = 0$};

			\path (l0') edge node[above,align=center, rotate=90] {$y=\expiringBound+1$\\ $\Clock'\setminus\{y\}\assign 0$} (l0);
			\path (lf) edge node[below,align=center, sloped] {$z=0\wedge y>\expiringBound$} (lpriv');
			\path (lf) edge node[below,align=center, sloped] {$z=0\wedge y\leq\expiringBound$} (lf');
			\path (lpriv') edge node[above, align=center] {$z=0$} (lf');
		\end{tikzpicture}
	}

	\caption{Construction used in \cref{th:weaktonorm}}
	\label{figure:proof-weaktonorm}
\end{figure}
\begin{proof}
Fix a TA~$\TA$ and a time bound~$\expiringBound \in \setRplusinf$.
In this reduction, we build a new TA~$\TA'$ where secret and non-secret runs are swapped.
More precisely, we add a new clock~$y$ that measures how much time has elapsed since the latest entrance of the private location. It is thus reset whenever we enter the private location $\locpriv$.
This clock is initialized to value $\expiringBound +1$ (which can be ensured by waiting in a new initial location $\locinit'$ for $\expiringBound+1$ time units before going to the original initial location $\locinit$ and resetting every clock but~$y$).
When reaching the final location~$\locfinal$, one can urgently (a new clock~$z$ can be used to force the system to move immediately) move to a new secret location~$\locpriv'$
	if $y > \expiringBound$ and then to the new final location~$\locfinal'$; otherwise (if $y \leq \expiringBound$), the TA can go directly to the new final location~$\locfinal'$.
Therefore, a run that would not be secret (as $y > \expiringBound$) is now secret and reciprocally.
Then, by testing \weakTempOpacityText{\expiringBound} of both~$\TA$ and~$\TA'$, one can check \fullTempOpacityText{\expiringBound} of~$\TA$.

Formally, given a TA~$\TA = \ensuremath{(\Actions, \Loc, \locinit, \locpriv, \locfinal, \Clock, \invariant, \Edges)}$ and $\expiringBound \in \setRplusinf$,
we build a second TA $\TA' = \ensuremath{(\Actions\cup\{\extraAction\}, \Loc', \locinit', \locpriv', \locfinal', \Clock\cup\{y,z\}, \invariant', \Edges')}$ where
$\extraAction$ denotes a special action absent from~$\Actions$ and where:
\begin{itemize}
\item $\Loc'= \Loc \cup \{\locinit', \locpriv', \locfinal'\}$;
\item $\forall \loc \in \Loc \setminus \{\locfinal\} : \invariant'(\loc)=\invariant(\loc)$;
	$\invariant'(\locfinal) = (\invariant(\locfinal)\wedge z = 0$);
	$\invariant'(\locinit') = (y\leq \expiringBound + 1)$;
	$\invariant'(\locpriv') = (z=0)$;
	$\invariant'(\locfinal') = (z=0)$.
\item for each $(\loc,g,a,R,\loc')\in \Edges$, we add
$\big(\loc,g,a,R',\loc'\big)$ to $\Edges'$ where
	$R' = R\cup \{y,z\}$ if $\loc'=\locpriv$ and
	$R' = R\cup\{z\}$ otherwise. %
We also add the following edges to~$\Edges'$:
$\big\{ \big(\locinit', (y=\expiringBound+1), \extraAction, \Clock, \locinit\big) , \big(\locfinal, (z=0\wedge y> \expiringBound), \extraAction, \emptyset, \locpriv'\big) , \big(\locfinal, (z=0\wedge y\leq \expiringBound), \extraAction, \emptyset, \locfinal'\big) , \big(\locpriv', (z=0), \extraAction, \emptyset, \locfinal'\big) \big\}$.
\end{itemize}
We give a graphical representation of our construction in \cref{figure:proof-weaktonorm}.
There is a one-to-one correspondence between the secret (resp.\ non-secret) runs ending in $\locpriv$ in $\TA$
and the non-secret (resp.\ secret) runs ending in $\locpriv'$ in $\TA'$. Given $\rho$ a run
in $\TA$ and $\rho'$ the corresponding run in $\TA'$, then the duration of $\rho'$ is equal to the duration of $\rho $ plus $\expiringBound +1$ (the time waited in $\locinit'$).

Recall from \cref{definition:temporary-timed-opacity} the definition of \weakTempOpacityText{\expiringBound} for~$\A'$:
$\TempInfPrivDurReach{\TA'}{\expiringBound} \subseteq \TempSupPrivDurReach{\TA'}{\expiringBound} \cup \PubDurReach{\TA'}$.
\begin{enumerate}
	\item First consider the left-hand part ``$\TempInfPrivDurReach{\TA'}{\expiringBound}$'':
	these execution times correspond to runs of~$\TA'$ for which $\locpriv'$ was visited less than~$\expiringBound$ (and actually~0) time units prior to reaching~$\locfinal'$.
	These runs passed the $y > \expiringBound$ guard between~$\locfinal$ and~$\locpriv'$.
	From our construction, these runs correspond to runs of the original~$\TA$ either not passing at all by~$\locpriv$ (since $y$ was never reset since its initialization to~$\expiringBound + 1$, and therefore $y \geq \expiringBound + 1 > \expiringBound$), or to runs which visited~$\locpriv$ more than~$\expiringBound$ time units before reaching~$\locfinal$.
	Therefore,
	$\TempInfPrivDurReach{\TA'}{\expiringBound} = \Set{d + 1+\expiringBound}{d\in \TempSupPrivDurReach{\TA}{\expiringBound} \cup \PubDurReach{\TA}}$

	\item Second, consider the right-hand part ``$\TempSupPrivDurReach{\TA'}{\expiringBound} \cup \PubDurReach{\TA'}$'':
	the set $\TempSupPrivDurReach{\TA'}{\expiringBound}$ is necessarily empty, as any run of~$\TA'$ passing through~$\locpriv'$ reaches~$\locfinal'$ immediately in 0-time.
	The execution times from~$\PubDurReach{\TA'}$ correspond to runs of~$\A'$ \emph{not} visiting~$\locpriv'$, therefore for which only the guard $y \leq \expiringBound$ holds.
	Hence, they correspond to runs of~$\TA$ which visited $\locpriv$ less than~$\expiringBound$ time units prior to reaching~$\locfinal$.
	Therefore,
	$\TempSupPrivDurReach{\TA'}{\expiringBound} \cup \PubDurReach{\TA'} = \Set{d + 1+\expiringBound}{d\in \TempInfPrivDurReach{\TA}{\expiringBound}}$
\end{enumerate}
To conclude, checking that $\TA'$ is \weakTempOpaqueText{\expiringBound} (\ie{} $\TempInfPrivDurReach{\TA'}{\expiringBound} \subseteq \TempSupPrivDurReach{\TA'}{\expiringBound} \cup \PubDurReach{\TA'}$) is equivalent to $\TempSupPrivDurReach{\TA}{\expiringBound} \cup \PubDurReach{\TA} \subseteq \TempInfPrivDurReach{\TA}{\expiringBound}$.
Moreover, from \cref{definition:temporary-timed-opacity}, checking that $\TA$ is \weakTempOpaqueText{\expiringBound} denotes checking
$\TempInfPrivDurReach{\TA}{\expiringBound} \subseteq \TempSupPrivDurReach{\TA}{\expiringBound} \cup \PubDurReach{\TA}$.
Therefore, checking that both~$\TA'$ and~$\TA$ are \weakTempOpaqueText{\expiringBound} denotes $\TempInfPrivDurReach{\TA}{\expiringBound} = \TempSupPrivDurReach{\TA}{\expiringBound} \cup \PubDurReach{\TA}$, which is the definition of \fullTempOpacityText{\expiringBound} for~$\TA$.

To conclude, $\TA$ is \fullTempOpaqueText{\expiringBound} iff
$\TA$ and $\TA'$ are \weakTempOpaqueText{\expiringBound}.
\end{proof}

\begin{remark}
One can similarly establish the opposite reduction.
Given a TA~$\TA$, we build a second TA~$\TA'$ differing from the automaton
created in the proof of \cref{th:weaktonorm} only in the transitions exiting $\locfinal$:
the transitions exiting $\locfinal$ now are
$\{\big(\locfinal, (z=0\wedge y\leq \expiringBound), \extraAction, \emptyset, \locpriv'\big),
\big(\locfinal, (z=0\wedge y> \expiringBound), \extraAction, \emptyset, \locfinal'\big),
\big(\locfinal, (z=0\wedge y> \expiringBound), \extraAction, \emptyset, \locpriv'\big)\}$.
This construction ensures that the runs which were secret in~$\TA$ correspond to
secret runs of~$\TA'$, while the runs that were non-secret in~$\TA$ correspond to
a secret and a non-secret run of~$\TA'$.
Thus $\TempInfPrivDurReach{\TA'}{\expiringBound} \supseteq \TempSupPrivDurReach{\TA'}{\expiringBound} \cup \PubDurReach{\TA'}$ with equality iff
$\TempInfPrivDurReach{\TA}{\expiringBound} \subseteq \TempSupPrivDurReach{\TA}{\expiringBound} \cup \PubDurReach{\TA}$.
Therefore the \weakTempOpacityText{\expiringBound}
of~$\TA$ can be deduced from the \fullTempOpacityText{\expiringBound} of~$\TA'$.
\end{remark}

We now temporarily restrict $\expiringBound$ to the integer set~$\setNinf$.
(\cref{theorem-delta-setR} will lift the coming results to~$\setRplusinf$.)

\begin{theorem}
\label{th:TAtempop}
The \weakfullTempOpacityText{\expiringBound} decision problem is decidable in \NEXPTIME.
\end{theorem}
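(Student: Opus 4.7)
The plan is to handle the weak case only, since by \cref{th:weaktonorm} full \tempOpacityText{\expiringBound} reduces to two instances of weak \tempOpacityText{\expiringBound}; membership in \NEXPTIME{} for weak thus transfers to full. First, augment $\TA$ with a fresh clock $y$ that is reset on every edge entering $\locpriv$, initialised by the standard trick with a wait location to $\expiringBound + 1$ (or simply interpreted as ``never reset'' when $\locpriv$ has not been visited). Since $\expiringBound$ is an integer, one can syntactically partition the reaches of $\locfinal$ into three disjoint coloured targets: $\locfinal^{\text{rec}}$ (reached when $y \leq \expiringBound$ and $\locpriv$ has been visited), $\locfinal^{\text{old}}$ (reached when $y > \expiringBound$ and $\locpriv$ has been visited), and $\locfinal^{\text{pub}}$ ($\locpriv$ never visited). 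The resulting TA~$\TA''$ has one extra clock and constants bounded by $\expiringBound + 1$, so its description size is polynomial in $|\TA| + \log \expiringBound$.

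Next, build the region automaton $\RA{\TA''}$ of \cref{definition:region-autmaton}, specialising its set of accepting states once for each colour; call the resulting unary NFAs $R_{\text{rec}}$, $R_{\text{old}}$, $R_{\text{pub}}$. By the key property recalled after \cref{definition:region-autmaton}, $\actiontick^k \in L(R_c)$ iff some run of colour $c$ reaches~$\locfinal$ within $[k, k+1)$ time units. A density argument in $\TA''$ (along each region all time-successors are realisable) shows that the sets $\PubDurReach{\TA}$, $\TempSupPrivDurReach{\TA}{\expiringBound}$, $\TempInfPrivDurReach{\TA}{\expiringBound}$ are finite unions of intervals whose integer skeletons are exactly encoded by the three languages, so that weak \tempOpaqueText{\expiringBound} is equivalent to the unary NFA inclusion
\[
L(R_{\text{rec}}) \;\subseteq\; L(R_{\text{old}}) \cup L(R_{\text{pub}}).
\]

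Each $R_c$ has size at most $2^{O(|\TA| + \log \expiringBound)}$. Because the alphabet is unary, the languages are ultimately periodic with threshold and period at most exponential in $|\TA| + \log \expiringBound$. A witness of non-inclusion, if any, can therefore be chosen to be an integer $k$ whose binary encoding is polynomial in $|\TA| + \log \expiringBound$; guessing such a $k$ and then checking that $R_{\text{rec}}$ accepts $\actiontick^k$ while $R_{\text{old}}$ and $R_{\text{pub}}$ both reject it can be carried out in time exponential in $|\TA| + \log \expiringBound$, since reachability and non-reachability in an exponential-size region graph of colour-annotated configurations can be decided in exponential time (reachability is \NLOGSPACE{} in the region graph size, and the unary structure lets us simulate $\actiontick^k$ transitions by repeated squaring on the region graph). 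This yields a \NEXPTIME{} procedure.

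The main obstacle is the density/integer-skeleton reduction: one must argue that if weak opacity fails at some real duration $d$, then it also fails at an integer $k$ detected by the region automaton, which is exactly where the time-abstract semantics of $\RA{\TA''}$ must be reconciled with the exact-duration statement of \cref{definition:temporary-timed-opacity}. Once this region-level equivalence is pinned down, the remaining work is the standard unary NFA inclusion analysis applied to an exponential-size automaton.
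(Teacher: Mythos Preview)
Your overall architecture matches the paper's: build region automata separating secret from non-secret behaviour, and reduce \weakfullTempOpacityText{\expiringBound} to unary-language inclusion, which is \NP{} in the automaton size and hence \NEXPTIME{} overall (the paper cites \cite{SM73} instead of your explicit guess-and-check, but that is cosmetic). The reduction from full to weak via \cref{th:weaktonorm} is exactly what the paper does too.

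The genuine gap is the step you yourself flag as ``the main obstacle''. Your region automaton accepts $\actiontick^k$ iff \emph{some} duration in $[k,k+1)$ is realised by a run of the given colour; it does not distinguish $\{k\}$ from $(k,k+1)$ from $[k,k+1)$. But the duration sets $\TempInfPrivDurReach{\TA}{\expiringBound}$, $\PubDurReach{\TA}$, etc.\ are unions of intervals with integer endpoints that may be open or closed, and inclusion of such sets is \emph{not} determined by the coarse per-unit-interval information. Concretely, take $\TempInfPrivDurReach{\TA}{\expiringBound}=\{1\}$ and $\PubDurReach{\TA}=(1,2)$ (both easily realised by suitable guards). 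Weak opacity fails at $d=1$, yet your $R_{\text{rec}}$ and $R_{\text{pub}}$ both accept $\actiontick^1$, so the unary inclusion $L(R_{\text{rec}})\subseteq L(R_{\text{old}})\cup L(R_{\text{pub}})$ wrongly holds. No ``density argument'' can repair this, because the issue is precisely that integer and strictly-non-integer durations are collapsed.

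The paper's fix is to first multiply every constant of~$\TA$ by~$2$ before taking the region automaton. Then a duration $t\in\setN$ in the original becomes $2t$ in the scaled TA, while a duration in the open interval $(t,t+1)$ becomes realisable at $2t+1$; thus $\actiontick^{2t}$ and $\actiontick^{2t+1}$ separately encode the integer point and the open unit interval, and inclusion of the unary languages is now equivalent to inclusion of the real duration sets. This is the missing ingredient in your proposal; once inserted, the rest of your argument goes through essentially as written.
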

\begin{proof}
Given a TA~$\TA$, we first build two TAs from~$\TA$, named $\TA_p$ and $\TA_s$ and representing respectively
the public and secret behavior of the original TA, while each constant is multiplied by~2. %
The consequence of this multiplication is that the final location can be reached in time strictly
between $t$ and $t+1$ (with $t\in \setN$) by a public (resp.\ secret) run in $\TA$ iff the target can be reached in time $2t+1$ in the TA $\TA_p$ (resp.~$\TA_s$).
Note that the correctness of this statement is a direct consequence of~\cite[Lemma 5.5]{BDR08}. %

We then build the region automata
$\RA{p}$ and $\RA{s}$ (of $\TA_p$ and~$\TA_s$ respectively).%

$\RA{p}$ is a non-deterministic unary (the alphabet is restricted to a single letter) automaton with $\varepsilon$ transitions
the language of which is $\{\actiontick^k \mid \text{there is a run of duration } k \text{ in } \TA_p  \}$, and similarly for~$\RA{s}$.

We are interested in testing equality (resp.\ inclusion) of those languages
for deciding the \weakfullTempOpacityText{\expiringBound} decision problem.

\cite[Theorem~6.1]{SM73}
establishes that language equality of unary automata is \NP-complete and the same proof implies that inclusion is in \NP. As the region automata are exponential, we get the result.
\end{proof}

\begin{remark}
In~\cite{ALMS22}, we established that the \fullTempOpacityText{+\infty} decision problem is in \TEXPTIME.
\cref{th:TAtempop} thus extends our former results in three ways: by including the
parameter $\expiringBound$, by reducing the complexity and by considering as well the \emph{weak} notion of ET-opacity.
\end{remark}

\begin{theorem}
\label{th:weakcomp}
The \weakTempOpacityText{\expiringBound} computation problem is solvable.
\end{theorem}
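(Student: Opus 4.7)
The plan is to reduce the computation of $\setBound$ to effective quantifier elimination in Presburger arithmetic, applied to semilinear sets extracted via Parikh images of a suitable region automaton. First, I augment $\TA$ with a fresh clock~$y$ that is reset on every edge entering $\locpriv$, so that upon reaching $\locfinal$ along any private run the value of~$y$ equals $\locationsDuration{\varrun}{\locpriv}{\locfinal}$. In addition to the usual tick action $\actiontick$ (fired every integer time unit via $\clock_\actiontick$), I introduce a second tick action $\actiontick_y$ fired every time $y$ crosses an integer, modeled by a bounded auxiliary clock mirroring the construction of $\clock_\actiontick$. The region automaton of the augmented TA is then a finite NFA over the two-letter alphabet $\{\actiontick, \actiontick_y\}$ (with $\varepsilon$-transitions for discrete jumps).

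Second, by Parikh's theorem, the Parikh image over $\{\actiontick, \actiontick_y\}$ of the language accepted at $\locfinal$ by private runs is a semilinear subset $S \subseteq \setN^2$, each element $(d, t) \in S$ encoding a pair (scaled total duration, scaled value of~$y$ at~$\locfinal$); similarly, the Parikh image for public runs yields a semilinear set $P \subseteq \setN$. Weak $(\leq\expiringBound)$-ET-opacity then reads
$\forall (d,t) \in S : t \leq \expiringBound \Rightarrow d \in P \vee \exists t' : (d,t') \in S \wedge t' > \expiringBound$,
which is a Presburger formula with $\expiringBound$ as its sole free variable. Effective Presburger quantifier elimination yields the set of valid $\expiringBound \in \setN$ as an effectively constructible semilinear subset of $\setN$; the boundary case $\expiringBound = +\infty$ reduces to a full ET-opacity check via \cite[Proposition~5.3]{ALMS22}, as already noted after \cref{ex:opacityProblemsTA}.

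The main obstacle is encoding the fractional information of~$y$ faithfully: pure integer counting via $\actiontick_y$ only records $\lfloor y \rfloor$, whereas the comparison $y \leq \expiringBound$ for integer~$\expiringBound$ further requires distinguishing $y = \expiringBound$ from $y \in (\expiringBound, \expiringBound+1)$. This is resolved by refining the tick alphabet with an extra symbol marking exact integer crossings, or equivalently by the scaling-by-two trick used in the proof of \cref{th:TAtempop}: both refinements preserve regularity (hence semilinearity of the resulting Parikh images), so the Presburger argument goes through unchanged, and the set $\setBound$ is output as a semilinear description together with the single extra bit answering the $\expiringBound = +\infty$ case.
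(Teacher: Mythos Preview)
Your argument has a genuine gap, and it is not the fractional-part obstacle you identify. The count of $\actiontick_y$ events along a run is the \emph{total} number of integer crossings of~$y$, summed over all intervals between successive resets of~$y$; it is not the value of~$y$ upon reaching~$\locfinal$. Concretely, if $\locpriv$ is entered at times $2$ and~$5$ and $\locfinal$ is reached at time~$8$, then the final value of~$y$ is~$3$, yet roughly seven $\actiontick_y$ events have fired along the way. Hence the second coordinate of your Parikh image~$S$ does not encode $\locationsDuration{\varrun}{\locpriv}{\locfinal}$, and the Presburger formula you write does not express \weakTempOpacityText{\expiringBound}. The repair is standard but absent from your sketch: nondeterministically guess which entry into~$\locpriv$ is the last one (branching at that point into a copy of~$\TA$ from which $\locpriv$ has been removed) and fire~$\actiontick_y$ only after that guess. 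Then the $\actiontick_y$-count genuinely equals the integer part of $\locationsDuration{\varrun}{\locpriv}{\locfinal}$, and the remainder of your argument goes through after the scaling refinement you describe.

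Even repaired, your route is heavier than the paper's, which rests on a monotonicity observation you do not use: if $\TA$ is \weakTempOpaqueText{\expiringBound} then it is \weakTempOpaqueText{\expiringBound'} for every $\expiringBound'\leq\expiringBound$ (the left-hand set in \cref{definition:temporary-timed-opacity} shrinks and the right-hand set grows as $\expiringBound$ decreases), so $\setBound$ is downward-closed in~$\setNinf$. The paper therefore simply decides the single instance $\expiringBound=+\infty$ via \cref{th:TAtempop}; if positive then $\setBound=\setNinf$, and if negative a bounded-size counterexample duration~$t$ witnesses that no $\expiringBound>t$ can work, after which the finitely many integers $\expiringBound\leq t$ are tested individually. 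What your approach buys in return, once fixed, is independence from monotonicity: the same Parikh/Presburger machinery would output~$\setBound$ as a semilinear set even for the \emph{full} variant, where monotonicity fails and the paper leaves the computation problem open.
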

\begin{proof}
First, we test whether $\TA$ is \weakTempOpaqueText{+\infty} thanks to
\cref{th:TAtempop}.
\begin{itemize}
	\item If $\TA$ is \weakTempOpaqueText{+\infty} then by definition (and monotonicity) of \weakTempOpacityText{\expiringBound}, $\TA$ is \weakTempOpaqueText{\expiringBound} for all $\expiringBound \in \setNinf$.

	\item Otherwise, there exists a duration $t\in \setRgeqzero$ such that $t\in \TempInfPrivDurReach{\TA}{+\infty} = \PrivDurReach{\TA}$ and $t\not \in \TempSupPrivDurReach{\TA}{+\infty} \cup \PubDurReach{\TA}=\PubDurReach{\TA}$. $t$ can be computed as a smallest word contradicting the inclusion of the language of the two exponential automata described in \cref{th:TAtempop}. Hence, $t$ is at most doubly exponential.
	For all $\expiringBound >t$, we thus have that
	$\TempInfPrivDurReach{\TA}{\expiringBound}\not\subseteq
	\TempSupPrivDurReach{\TA}{\expiringBound} \cup \PubDurReach{\TA}$ and thus that
	$\TA$ is not \weakTempOpaqueText{\expiringBound}.
	In order to synthesize the bounds $\expiringBound\in \setN$ such that
	$\TA$ is \weakTempOpaqueText{\expiringBound}, we therefore only have to test the finitely
	many integers below $t$ using \cref{th:TAtempop}.%
\end{itemize}
\end{proof}

\begin{corollary}\label{th:weakEmptiness}
The \weakTempOpacityText{\expiringBound} emptiness problem is decidable.
\end{corollary}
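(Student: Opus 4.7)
The plan is to derive the corollary as an immediate consequence of \cref{th:weakcomp}: since its algorithm produces a complete, finite representation of the maximal set $\setBound$ of bounds for which $\TA$ is \weakTempOpaqueText{\expiringBound}, deciding the emptiness of $\setBound$ is trivial. Concretely, the algorithm of \cref{th:weakcomp} returns either $\setNinf$ (when $\TA$ is \weakTempOpaqueText{+\infty}) or a finite list of integer bounds below the computable, at most doubly exponential value~$t$; emptiness of such a representation can be read off directly.

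A more self-contained route, avoiding the full computation, is to exploit monotonicity. I would first observe that $\TempInfPrivDurReach{\TA}{\expiringBound}$ is non-decreasing in $\expiringBound$, whereas $\TempSupPrivDurReach{\TA}{\expiringBound} \cup \PubDurReach{\TA}$ is non-increasing in $\expiringBound$ (the public component $\PubDurReach{\TA}$ being independent of $\expiringBound$). Hence, if $\TA$ is \weakTempOpaqueText{\expiringBound_2} for some $\expiringBound_2$ and $\expiringBound_1 \leq \expiringBound_2$, then
\[
\TempInfPrivDurReach{\TA}{\expiringBound_1} \subseteq \TempInfPrivDurReach{\TA}{\expiringBound_2} \subseteq \TempSupPrivDurReach{\TA}{\expiringBound_2} \cup \PubDurReach{\TA} \subseteq \TempSupPrivDurReach{\TA}{\expiringBound_1} \cup \PubDurReach{\TA},
\]
so $\TA$ is \weakTempOpaqueText{\expiringBound_1} as well. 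Therefore the set of valid bounds is downward closed, and it is non-empty if and only if it contains $\expiringBound = 0$.

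Once this monotonicity is in place, the corollary reduces to a single instance of the \weakTempOpacityText{\expiringBound} decision problem at $\expiringBound = 0$, which is decidable (in \NEXPTIME) by \cref{th:TAtempop}. I expect no real obstacle: both ingredients are already proved, and monotonicity in $\expiringBound$ is a direct consequence of the definitions of $\TempInfPrivDurReach{\TA}{\expiringBound}$ and $\TempSupPrivDurReach{\TA}{\expiringBound}$.
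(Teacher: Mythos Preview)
Your first route---compute the full set $\setBound$ via \cref{th:weakcomp} and then test emptiness---is exactly what the paper does, in essentially one sentence.

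Your second route is correct and genuinely different. The paper does not take this shortcut for the corollary, although the same monotonicity observation (weak opacity at a larger bound implies weak opacity at any smaller bound) appears implicitly in the proof of \cref{th:weakcomp} and explicitly in the proof of \cref{theorem-delta-setR}. What your argument buys is a much better complexity: a single invocation of the \NEXPTIME{} decision procedure of \cref{th:TAtempop} at $\expiringBound = 0$, instead of the computation procedure of \cref{th:weakcomp}, which in the non-trivial branch enumerates and tests all integer bounds up to a doubly exponential value~$t$. The paper's approach, on the other hand, makes the corollary an immediate by-product of the stronger computation result, with no additional reasoning required. Both are valid; yours is sharper as a stand-alone decidability proof.
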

\begin{proof}
	According to~\cref{th:weakcomp}, \weakTempOpacityText{\expiringBound} computation is solvable.
	Therefore, to ask for emptiness, one can compute the set of bounds ensuring the \weakTempOpacityText{\expiringBound} and check its emptiness.
\end{proof}

In contrast to \weakTempOpacityText{\expiringBound} computation, we only show below that \fullTempOpacityText{\expiringBound} emptiness is decidable; the computation problem remains open.

\begin{theorem}\label{thm:opacityEmptiness}
The \fullTempOpacityText{\expiringBound} emptiness problem is decidable.
\end{theorem}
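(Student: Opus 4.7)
The plan is to exploit the opposite monotonicity of the two inclusions that together make up \fullTempOpacityText{\expiringBound}, reducing the existential quantifier over $\expiringBound\in\setNinf$ to a single call to the \fullTempOpacityText{\expiringBound} decision procedure at a computable bound.

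First I would observe that \fullTempOpacityText{\expiringBound} is the conjunction of \weakTempOpacityText{\expiringBound}, call it $W(\expiringBound)$, and of the reverse inclusion $\widetilde{W}(\expiringBound)\equiv\TempSupPrivDurReach{\TA}{\expiringBound}\cup\PubDurReach{\TA}\subseteq\TempInfPrivDurReach{\TA}{\expiringBound}$. As $\expiringBound$ grows, $\TempInfPrivDurReach{\TA}{\expiringBound}$ can only grow (more private runs fall within the age bound) while $\TempSupPrivDurReach{\TA}{\expiringBound}$ can only shrink, and $\PubDurReach{\TA}$ does not depend on $\expiringBound$; consequently $W_\TA=\{\expiringBound\mid W(\expiringBound)\}$ is downward-closed in $\setNinf$ and $\widetilde{W}_\TA=\{\expiringBound\mid\widetilde{W}(\expiringBound)\}$ is upward-closed in $\setNinf$.

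Downward-closedness forces $W_\TA$ to be $\emptyset$, a finite initial segment $\{0,\dots,k\}$, or all of $\setNinf$, so when non-empty it admits a maximum $k\in\setNinf$. By upward-closedness of $\widetilde{W}_\TA$, the intersection $W_\TA\cap\widetilde{W}_\TA$ is non-empty iff $k\in\widetilde{W}_\TA$; since $W(k)$ holds by construction, this last condition is equivalent to $\TA$ being \fullTempOpaqueText{k}. The algorithm is then as follows: compute $W_\TA$ via \cref{th:weakcomp} (its maximum is either $+\infty$, detected by the \weakTempOpacityText{+\infty} decision, or bounded by the doubly exponential $t$ exhibited in the proof of \cref{th:weakcomp} and found by finitely many calls to \cref{th:TAtempop}); if $W_\TA=\emptyset$ return ``empty''; otherwise invoke the \fullTempOpacityText{\expiringBound} decision procedure of \cref{th:TAtempop} at $\expiringBound=k$. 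The case $k=+\infty$ reduces to testing the equality $\PrivDurReach{\TA}=\PubDurReach{\TA}$, i.e.\ equivalence of the unary languages of the region automata of the private and public parts of $\TA$, which is itself a particular case of \cref{th:TAtempop}.

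The conceptual heart of the argument is the monotonicity observation, which replaces a search over infinitely many $\expiringBound$ by a single decision instance; once it is in place, every remaining ingredient is already available in the paper. The point I expect to require the most care is the boundary behavior at $+\infty$: I would make explicit that $\widetilde{W}_\TA$ is non-empty iff $+\infty\in\widetilde{W}_\TA$ (since $+\infty$ is the top of $\setNinf$), which is in turn equivalent to $\PubDurReach{\TA}\subseteq\PrivDurReach{\TA}$, so that the two extreme cases $k\in\setN$ and $k=+\infty$ are handled uniformly by a single \fullTempOpacityText{\expiringBound} query.
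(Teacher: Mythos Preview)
Your proposal is correct and follows essentially the same approach as the paper's: compute $W_\TA$ via \cref{th:weakcomp}, then exploit the opposite monotonicities of the two inclusions to reduce emptiness to a decision instance at an extremal bound (\cref{th:TAtempop}). The paper's proof differs only cosmetically, in that for finite $W_\TA$ it tests every bound rather than just $\max W_\TA$, while for infinite $W_\TA$ it invokes exactly your monotonicity argument to reduce to $\expiringBound=+\infty$.
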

\begin{proof}
Given a TA $\TA$, using \cref{th:weakcomp}, we first compute the set of bounds
$\expiringBound$ such that $\TA$ is \weakTempOpaqueText{\expiringBound}.
As \fullTempOpacityText{\expiringBound} requires \weakTempOpacityText{\expiringBound}, if the
computed set is finite, then we only need to check the bounds of this set for
\fullTempOpacityText{\expiringBound} and thus synthesize all the bounds achieving
\fullTempOpacityText{\expiringBound}---which immediately allows us to decide emptiness.

If this set is infinite however, by the proof of \cref{th:weakcomp}, $\TA$ is \weakTempOpaqueText{\expiringBound} for any bound $\expiringBound\in\setNinf$.
To achieve \fullTempOpacityText{\expiringBound}, we only need to detect when the non-secret durations are included in the secret ones.
As the set of secret (resp.\ non-secret) durations increases (resp.\ decreases) when  $\expiringBound$ increases, there is a valuation of $\expiringBound$ achieving \fullTempOpacityText{\expiringBound} of $\TA$ iff $\TA$ is \fullTempOpaqueText{+\infty}.
The latter can be decided with \cref{th:TAtempop}.
\end{proof}

\begin{theorem}\label{theorem-delta-setR}
All aforementioned results with $\expiringBound \in \setNinf$ also hold for $\expiringBound \in \setRplusinf$.%
\end{theorem}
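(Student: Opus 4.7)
The plan is to reduce the real-valued case to the integer case by a scaling argument. Given a rational bound $\expiringBound = p/q$ with $p, q \in \setN$, I would construct a scaled TA $\TA^q$ by multiplying every integer constant appearing in the guards and invariants of $\TA$ by $q$ (this is the standard rescaling mentioned in \cref{sec:TA-definition}). Runs of $\TA^q$ are in time-scaled bijection with runs of $\TA$: a run of duration $d$ in $\TA$ corresponds to a run of duration $qd$ in $\TA^q$, and similarly $\locationsDuration{\rho}{\locpriv}{\locfinal}$ scales by~$q$. Consequently, $\TA$ is weakly (resp.\ fully) \tempOpaqueText{\expiringBound} iff $\TA^q$ is weakly (resp.\ fully) \tempOpaqueText{p}. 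Since $p \in \setN$, the decision problem is handled directly by \cref{th:TAtempop}. The case $\expiringBound = +\infty$ is already covered.

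For the emptiness and computation problems over $\setRplusinf$, I would argue that the set $\setBound$ of real bounds for which $\TA$ is weakly (resp.\ fully) \tempOpaqueText{\expiringBound} is a finite union of intervals with rational endpoints. The key observation is that the set $\PrivDurReach{\TA}$ is a finite union of intervals with integer endpoints (an immediate consequence of the region construction, as already exploited in \cref{th:TAtempop,th:weakcomp}); moreover, by lifting the same analysis to the product of the region automaton with an extra ``age'' clock that is reset upon entering~$\locpriv$, the two-dimensional set $\{(d, \locationsDuration{\rho}{\locpriv}{\locfinal}) \mid \rho \in \PrivReach{\TA}, \runDuration{\rho} = d\}$ is likewise a finite union of rational polytopes. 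It follows that the classifications of durations inducing $\TempInfPrivDurReach{\TA}{\expiringBound}$ and $\TempSupPrivDurReach{\TA}{\expiringBound}$ only change as $\expiringBound$ crosses finitely many rational ``critical'' values extractable from the vertices of these polytopes. Between two consecutive critical values, opacity is constant, and by the bound $t$ established in \cref{th:weakcomp}, no opaque behaviour (for weak opacity) can occur above some computable threshold. Therefore the computation problem reduces to running the integer-case decision procedure on one rational representative of each of the finitely many intervals delimited by critical values, and emptiness is immediate once $\setBound$ has been computed.

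The main obstacle is the structural step for the computation problem: establishing that the opacity status is piecewise constant in $\expiringBound$ with finitely many rational breakpoints, and that these breakpoints can be effectively enumerated. Once this structural lemma is available, everything else follows by applying \cref{th:TAtempop,th:weakcomp,thm:opacityEmptiness,th:weaktonorm} on the scaled TAs. The scaling reduction for the pure decision problem is routine; it is the lifting of \cref{th:weakcomp} from $\setN$ to $\setQplus$ (and thence to $\setRplusinf$) that requires this additional argument.
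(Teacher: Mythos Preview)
Your approach is essentially correct but takes a different and somewhat heavier route than the paper.

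The paper does not scale by a variable denominator~$q$. Instead it proves a sharper structural fact: for any non-integer $\expiringBound$, the TA~$\TA$ is \weakfullTempOpaqueText{\expiringBound} iff it is \weakfullTempOpaqueText{\lfloor\expiringBound\rfloor+\tfrac12}. The argument uses the region automaton directly: since reachable-duration sets are unions of intervals with \emph{integer} endpoints (\cite[Prop.~5.3]{BDR08}), whenever the opacity equality (or inclusion) holds, the sets $\TempInfPrivDurReach{\TA}{\expiringBound}$ and $\TempSupPrivDurReach{\TA}{\expiringBound}\cup\PubDurReach{\TA}$ must themselves have integer boundaries, so the fractional part of~$\expiringBound$ plays no role as long as it is nonzero. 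A single fixed rescaling by~$2$ then turns $\lfloor\expiringBound\rfloor+\tfrac12$ into an integer and the previous theorems apply. In particular the ``critical values'' you are looking for are exactly the integers---no polytope vertex-enumeration is required.

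Your two-dimensional polytope argument would work (the set $\{(d,\locationsDuration{\varrun}{\locpriv}{\locfinal})\}$ is indeed a finite union of zones over integer constants once an age clock is added), and carried through it would also yield integer breakpoints. But it is more machinery than needed, and you correctly flag it as the unfinished step. Note also a small gap in your decision-problem treatment: scaling by~$q$ only covers rational~$\expiringBound$, so even there you implicitly rely on the structural lemma to handle irrational bounds. The paper's route sidesteps this entirely by reducing every non-integer real to a single half-integer representative.
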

\begin{proof}%
\newcommand{\runprive}{\ensuremath{\varrun_\mathit{priv}}}%
\newcommand{\runpublic}{\ensuremath{\varrun_\mathit{pub}}}%
\newcommand{\timePriv}[1]{\ensuremath{\mathit{lt}_\mathit{priv}(#1)}}%
\newcommand{\timeFinal}[1]{\ensuremath{\mathit{lt}_f(#1)}}%
\newcommand{\VPriv}[1]{\ensuremath{V_\mathit{priv}(#1)}}%
\newcommand{\VFinal}[1]{\ensuremath{V_\mathit{f}(#1)}}%
\newcommand{\DPriv}[2]{\ensuremath{\mathit{DPriv}_{#1}(#2)}}%
\newcommand{\DPub}[2]{\ensuremath{\mathit{DPub}_{#1}(#2)}}%
\newcommand{\RRun}[1]{\ensuremath{\mathit{RRun}_{#1}}}%
Given a TA $\TA$ and $\expiringBound\in \setRplusinf\setminus\setNinf$, we will show that
$\TA$ is \weakfullTempOpaqueText{\expiringBound} iff it is \weakfullTempOpaqueText{
\integralp{\expiringBound} + \frac{1}{2}}.
Constructing the TA~$\TA'$ where every constant is multiplied by~2, we will thus have that
$\TA$ is \weakfullTempOpaqueText{\expiringBound} iff $\TA'$ is \weakfullTempOpaqueText{\expiringBound'} where
$\expiringBound' = 2\expiringBound$ if $\expiringBound \in \setN$
	and
$\expiringBound' = 2\integralp{\expiringBound}+1$ otherwise. The previous results of this section applying on $\TA'$, they can be transposed to $\TA$.

We now move to the proof that $\TA$ is \weakfullTempOpaqueText{\expiringBound} iff it is \weakfullTempOpaqueText{\integralp{\expiringBound} + \frac{1}{2}}.
Let $\expiringBound\in \setRgeqzero\setminus\setN$ such that $\TA$ is
\weakfullTempOpaqueText{\expiringBound} and let $\expiringBound' = \integralp{\expiringBound} + \frac{1}{2}$.

Given a run $\varrun\in \TempInfPrivReach{\TA}{\expiringBound}$, let $\timePriv{\varrun}$ be the time at
which $\varrun$ enters for the last time the private location. We denote by
$\VPriv{\varrun}$ the singleton $\{\timePriv{\varrun}\}$ if $\timePriv{\varrun}\in \setN$ and the open interval
$(\integralp{\timePriv{\varrun}},\integralp{\timePriv{\varrun}} +1)$ otherwise.
By definition of the region automaton, one can build runs going through the same path as~$\varrun$ in the region automaton
of $\TA$ but reaching the private location at any point within $\VPriv{\varrun}$.
Similarly, given $\timeFinal{\varrun} = \runDuration{\varrun}$ the duration of $\varrun$ until the final location,
we denote $\VFinal{\varrun}$ the singleton $\{\timeFinal{\varrun}\}$ if $\timeFinal{\varrun}\in \setN$ and the open interval $(\integralp{\timeFinal{\varrun}},\integralp{\timeFinal{\varrun}} +1)$ otherwise. Let $\RRun{\varrun}$ be the set of runs
that follow the same path as $\varrun$ in the region automaton.
The set of durations of runs of $\RRun{\varrun}$ which belong to
$\TempInfPrivReach{\TA}{\expiringBound}$ is
$\VFinal{\varrun}\cap [0, \max_{\varrun'\in \RRun{\varrun},\runDuration{\varrun'}=\runDuration{\varrun}}(\VPriv{\varrun'}) + \expiringBound]$,
 which is either $\VFinal{\varrun}$ or
the left-open interval $(\integralp{\timeFinal{\varrun}}, \integralp{\timeFinal{\varrun}}+ \fract{\expiringBound}]$.
We denote by $\DPriv{\expiringBound}{\varrun}$ this set of durations.

Similarly, given a run $\varrun \in \TempSupPrivReach{\TA}{\expiringBound}$ reaching the final location at time $\timeFinal{\varrun}$,
we can again rely on the region automaton to build a set of durations
$\DPub{\expiringBound}{\varrun}$ describing the durations of runs that follow the same path as~$\varrun$ in the region automaton and that reach the final location more than $\expiringBound$ after entering the private location.
This set is of the form $\big\{\timeFinal{\varrun}\big\}$ if $\timeFinal{\varrun}\in\setN$,
or
$\big(\integralp{\timeFinal{\varrun}} + \fract{\expiringBound}, \integralp{\timeFinal{\varrun}} + 1\big)$
or
$\big(\integralp{\timeFinal{\varrun}}, \integralp{\timeFinal{\varrun}} + 1\big)$.

Assume first that $\TA$ is \fullTempOpaqueText{\expiringBound}.
As the set of durations reaching the final location is a union of intervals with integer bounds~\cite[Proposition 5.3]{BDR08} and as $\TA$ is \fullTempOpaqueText{\expiringBound}, the set
\TempInfPrivDurReach{\TA}{\expiringBound} and the set $\TempSupPrivDurReach{\TA}{\expiringBound} \cup \PubDurReach{\TA}$ describe the same union of intervals with integer bounds.
Let $t$ be a duration within those sets.
Then we will show that $t\in \TempInfPrivDurReach{\TA}{\expiringBound'} $ and
$t\in \TempSupPrivDurReach{\TA}{\expiringBound'} \cup \PubDurReach{\TA}$. Note that if
$t\in \PubDurReach{\TA}$ the latter statement is directly obtained, we will thus ignore this case in the following.
By definition of $\TempSupPrivDurReach{\TA}{\expiringBound}$ and $\TempInfPrivDurReach{\TA}{\expiringBound}$, there thus
exists a run $\runprive$ and a run $\runpublic$ such that
$t\in \DPriv{\expiringBound}{\runprive}$ and
$t\in \DPub{\expiringBound}{\runpublic}$.
Moreover, we can assume that those runs satisfy that $\DPriv{\expiringBound}{\runprive}$ and
$\DPub{\expiringBound}{\runpublic}$ do not depend on the bound $\expiringBound$ (\ie{} they are equal to $\VFinal{\runprive}$ and $\VFinal{\runpublic}$ respectively). Indeed, if such runs
did not exist, the set \TempInfPrivDurReach{\TA}{\expiringBound} or the set
$\TempSupPrivDurReach{\TA}{\expiringBound} \cup \PubDurReach{\TA}$ would have
$\integralp{t} + \fract{\expiringBound}$ as one of its bounds.
As a consequence, $\TempSupPrivDurReach{\TA}{\expiringBound}\cup \PubDurReach{\TA}=\TempSupPrivDurReach{\TA}{\expiringBound'}\cup \PubDurReach{\TA}$ and
$\TempInfPrivDurReach{\TA}{\expiringBound} = \TempInfPrivDurReach{\TA}{\expiringBound'}$.
Thus $\TA$ is \fullTempOpaqueText{\expiringBound'}.

Assume now that $\TA$ is \weakTempOpaqueText{\expiringBound}.
We consider first the case where $\expiringBound\geq \expiringBound'$.
There we have by definition $\TempInfPrivReach{\TA}{\expiringBound'}\subseteq \TempInfPrivReach{\TA}{\expiringBound}$ and $\TempSupPrivReach{\TA}{\expiringBound}\subseteq \TempSupPrivReach{\TA}{\expiringBound'}$, thus $\TA$ is \weakTempOpaqueText{\expiringBound'}.

Now assume that $\expiringBound< \expiringBound'$. The same reasoning as for the full version mostly applies.
As the set of durations reaching the final location is a union of intervals with integer bounds \cite[Proposition 5.3]{BDR08} and as $\TA$ is \weakTempOpaqueText{\expiringBound},
the set $\TempSupPrivDurReach{\TA}{\expiringBound} \cup \PubDurReach{\TA}$ describes the same union of intervals with integer bounds.
By the same reasoning as before, $\TempSupPrivDurReach{\TA}{\expiringBound}\cup \PubDurReach{\TA}=\TempSupPrivDurReach{\TA}{\expiringBound'}\cup \PubDurReach{\TA}$.
Moreover, given $t\in \TempInfPrivDurReach{\TA}{\expiringBound'}$, there exists $\runprive$
such that $t\in \DPriv{\expiringBound'}{\runprive}$.
Note that either $\DPriv{\expiringBound'}{\runprive} = \DPriv{\expiringBound}{\runprive}$ and is thus included in $\TempSupPrivDurReach{\TA}{\expiringBound} \cup \PubDurReach{\TA}$
or $\DPriv{\expiringBound'}{\runprive} = (\integralp{\timeFinal{\runprive}}, \integralp{\timeFinal{\runprive}} + \fract{\expiringBound'}]$ and $\DPriv{\expiringBound}{\runprive} = (\integralp{\timeFinal{\runprive}}, \integralp{\timeFinal{\runprive}} + \fract{\expiringBound}]$.
As the latter is included in $\TempSupPrivDurReach{\TA}{\expiringBound} \cup \PubDurReach{\TA}$ which only has integer bounds, then the former is included in it as well.

Remark that in the above~$\expiringBound$ and~$\expiringBound'$ can be freely swapped and thus $\TA$ is \weakfullTempOpaqueText{\expiringBound'} iff
$\TA$ is \weakfullTempOpaqueText{\expiringBound}.
\end{proof}

\section{Expiring execution-time opacity in PTAs}\label{section:PTAs}

We are now interested in the computation (and the emptiness) of the valuations set ensuring that a system is \weakfullTempOpaqueText{\expiringBound}.
We define the following problems, where we ask for parameter valuations $\pval$ and for valuations of~$\expiringBound$ \st{} $\valuate{\PTA}{\pval}$ is \weakfullTempOpaqueText{\expiringBound}.

\smallskip

\defProblem
{\weakfullTempOpacityText{\expiringBound} emptiness}
{A PTA~$\PTA$
}
{Decide whether the set of parameter valuations~$\pval$ and valuations of~$\expiringBound$ such that $\valuate{\PTA}{\pval}$ is \weakfullTempOpaqueText{\expiringBound} is empty}

\smallskip

\defProblem
{\weakfullTempOpacityText{\expiringBound} computation}
{A PTA~$\PTA$
}
{Synthesize the set of parameter valuations~$\pval$ and valuations of~$\expiringBound$ such that $\valuate{\PTA}{\pval}$ is \weakfullTempOpaqueText{\expiringBound}}

\begin{remark}\label{rmq:PTA-decision-problem}
A ``\fullTempOpacityText{\expiringBound} decision problem'' over PTAs is not defined; it aims to decide whether, given a parameter valuation $\pval$ and a bound~$\expiringBound$, a PTA is \fullTempOpaqueText{\expiringBound}:
it can directly reduce to the problem over a TA (which is decidable, \cref{th:TAtempop}). %
\end{remark}

\begin{example}\label{ex:opacityProblemsPTA}
	Consider again the PTA~$\PTA$ in~\cref{figure:example-PTA}.

	For this PTA, the answer to the \weakTempOpacityText{\expiringBound} emptiness problem is false, as there exists such a valuation (\eg{} the valuation given for~\cref{ex:opacityProblemsTA}).

	Moreover, we can show that, for all $\expiringBound$ and $\pval$:
	\begin{itemize}
		\item $\PubDurReach{\valuate{\PTA}{\pval}} = [0,3] $
		\item if $\valuate{\parami{1}}{\pval}>3$ or $\valuate{\parami{1}}{\pval}>\valuate{\parami{2}}{\pval}$, it is not possible to reach $\locfinal$ with a run passing through $\locpriv$ and therefore $\TempSupPrivDurReach{\valuate{\PTA}{\pval}}{\expiringBound} = \TempInfPrivDurReach{\valuate{\PTA}{\pval}}{\expiringBound} = \emptyset$
		\item if $\valuate{\parami{1}}{\pval}\leq3$ and $\valuate{\parami{1}}{\pval}\leq\valuate{\parami{2}}{\pval}$
		\begin{itemize}
			\item $\TempSupPrivDurReach{\valuate{\PTA}{\pval}}{\expiringBound} = (\valuate{\parami{1}}{\pval} + \expiringBound , \valuate{\parami{2}}{\pval}] $
			\item $\TempInfPrivDurReach{\valuate{\PTA}{\pval}}{\expiringBound} = [\valuate{\parami{1}}{\pval}, \min(\expiringBound+3, \valuate{\parami{2}}{\pval})] $
		\end{itemize}
	\end{itemize}

	Recall that the \fullTempOpacityText{\expiringBound} computation problem aims at synthesizing the valuations such that $\TempInfPrivDurReach{\valuate{\PTA}{\pval}}{\expiringBound} = \TempSupPrivDurReach{\valuate{\PTA}{\pval}}{\expiringBound} \cup \PubDurReach{\valuate{\PTA}{\pval}}$.
	The answer to this problem is therefore the set of valuations of timing parameters and of~$\expiringBound$ \st{} $\pval(\parami{1})=0 \wedge ((\expiringBound\leq 3 \wedge 3\leq \pval(\parami{2})\leq \expiringBound+3)\vee(\pval(\parami{2})<\expiringBound\wedge \pval(\parami{2})=3))$.
\end{example}
\subsection{The subclass of L/U-PTAs}
\begin{definition}[L/U-PTA~\cite{HRSV02}]\label{def:LUPTA}
	An \emph{L/U-PTA} is a PTA where the set of parameters is partitioned into lower-bound parameters and upper-bound parameters,
	where each upper-bound (resp.\ lower-bound) parameter~$\param_i$ must be such that,
	for every guard or invariant constraint $\clock \compOp \sum_{1 \leq i \leq \ParamCard} \alpha_i \param_i + d$, we have:
		${\compOp} \in \{ \leq, < \}$ implies $\alpha_i \geq 0$ (resp.\ $\alpha_i \leq 0$) and
		${\compOp} \in \{ \geq, > \}$ implies $\alpha_i \leq 0$ (resp.\ $\alpha_i \geq 0$).
\end{definition}
\begin{example}
	The PTA in \cref{figure:example-PTA} is an L/U-PTA with $\{ \param_1 \}$ as lower-bound parameter, and $\{ \param_2 \}$ as upper-bound parameter.
\end{example}

L/U-PTAs is the most well-known subclass of PTAs with some decidability results: for example, reachability-emptiness (``the emptiness of the valuations set for which a given location is reachable''), which is undecidable for PTAs, becomes decidable for L/U-PTAs~\cite{HRSV02}.
Various other results were studied (\eg{} \cite{BlT09,JLR15,ALR22}).
Concerning opacity, the existence of a parameter valuation \emph{and an execution time} such that the system is opaque is decidable for L/U-PTAs~\cite{ALMS22}, while the full ET-opacity emptiness becomes undecidable~\cite{ALMS22}.

Here, we show that both the \fullTempOpacityText{\expiringBound} emptiness and the \weakTempOpacityText{\expiringBound} emptiness problems are undecidable for L/U-PTAs.
This is both surprising (seeing from the existing decidability results for L/U-PTAs) and unsurprising, considering the undecidability of the full ET-opacity emptiness for this subclass~\cite{ALMS22}.

\begin{theorem}\label{thm:EmptinessLU}
	The \weakfullTempOpacityText{\expiringBound} emptiness problem is undecidable for L/U-PTAs with at least 4~clocks and 4~parameters.
\end{theorem}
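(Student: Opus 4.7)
The plan is a reduction from the full (resp.\ weak) ET-opacity emptiness problem for L/U-PTAs, shown undecidable in~\cite{ALMS22}. Given an L/U-PTA $\TA$ (the~\cite{ALMS22} construction uses 3~clocks and 3~parameters), I would build an L/U-PTA $\TA'$ by adding one fresh clock~$z$ and one fresh upper-bound parameter $\expiringBound$. The clock~$z$ is reset on every edge entering~$\locpriv$, and every edge entering~$\locfinal$ is augmented with the guard $z \leq \expiringBound$. These additions keep the construction within 4~clocks and 4~parameters and preserve the L/U discipline, since $z$ appears only in the upper-bound guard $z \leq \expiringBound$ and is never compared to a lower-bound parameter.

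The key point is that, by choosing $\expiringBound$ sufficiently large, the new guard becomes trivially satisfied on every run, so $\TA'$ exhibits the same runs as $\TA$; yet \emph{every} such run that visits $\locpriv$ now reaches $\locfinal$ within $z \leq \expiringBound$ time units, \ie{} is a secret run. Consequently, no ``old-private'' run exists, so $\TempSupPrivDurReach{\TA'}{\expiringBound} = \emptyset$, $\TempInfPrivDurReach{\TA'}{\expiringBound} = \PrivDurReach{\TA}$ and $\PubDurReach{\TA'} = \PubDurReach{\TA}$. The definition of \fullTempOpacityText{\expiringBound} (resp.\ \weakTempOpacityText{\expiringBound}) of $\TA'$ thus collapses to $\PrivDurReach{\TA} = \PubDurReach{\TA}$ (resp.\ $\PrivDurReach{\TA} \subseteq \PubDurReach{\TA}$), which is exactly the definition of full (resp.\ weak) ET-opacity for~$\TA$. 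Hence the emptiness of $(\pval, \expiringBound)$-pairs making $\TA'$ \weakfullTempOpaqueText{\expiringBound} coincides with the emptiness of valuations making $\TA$ (weakly/fully) ET-opaque, which is undecidable.

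The main obstacle is justifying the existence of a ``sufficiently large'' $\expiringBound$ on a per-instance basis: one must verify that the \cite{ALMS22} construction only produces runs reaching $\locfinal$ in time bounded by a quantity expressible from the original parameters, so that a value of $\expiringBound$ exceeding this bound renders $z \leq \expiringBound$ vacuous on every relevant run. Should boundedness not be granted for free, I would add a fresh absolute clock~$\clockabs$ and an additional upper-bound parameter constraining the duration to reach $\locfinal$, and argue that the bounded fragment of $\TA$ preserves the original opacity question. In case the weak variant of ET-opacity emptiness for L/U-PTAs is not directly established in~\cite{ALMS22}, a structurally identical reduction from the halting problem of two-counter machines (mirroring the inclusion rather than the equality of duration sets) independently yields the weak case.
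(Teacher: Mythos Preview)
Your reduction is one-directional, and the missing direction actually fails. You argue that if some $\pval$ makes $\TA$ fully (resp.\ weakly) ET-opaque, then extending $\pval$ by a sufficiently large value of the fresh upper-bound parameter, together with an expiration bound at least as large, yields a pair making $\TA'$ \weakfullTempOpaqueText{\expiringBound}. But the emptiness problem for~$\TA'$ quantifies over \emph{all} pairs $(\pval',\expiringBound)$, including those where the fresh parameter is small. Your guard $z \leq \param$ on every edge into~$\locfinal$ truncates private runs \emph{and} public runs (along the latter, $z$ is never reset and thus equals the total elapsed time). A small value of the fresh parameter can therefore equalise the two duration sets even when $\TA$ admits no ET-opaque valuation. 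Concretely: take a parameter-free~$\TA$ with $\PrivDurReach{\TA}=[0,2]$ and $\PubDurReach{\TA}=[0,1]$ (hence never ET-opaque); in your~$\TA'$, assigning~$1$ to the fresh parameter cuts both sets down to $[0,1]$, so $\TA'$ becomes fully \tempOpaqueText{1}. Thus non-emptiness for~$\TA'$ does \emph{not} imply non-emptiness for~$\TA$, and undecidability does not transfer. Your fallback of bounding total duration via a further fresh upper-bound parameter has the same defect: it again enlarges the search space on the $\TA'$ side without ever restricting it.

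For comparison, the paper does not reduce from ET-opacity emptiness at all. It reduces from reachability-emptiness in constant time for general PTAs \cite[Lemma~7.1]{ALMS22}, splits each of the two parameters into a lower/upper pair to obtain an L/U-PTA, and adds a gadget that renders the system non-opaque whenever $\param_i^l \neq \param_i^u$ for some~$i$; this kills all ``spurious'' valuations, leaving only those that faithfully encode the original PTA. Crucially, the gadget also forces every secret run to spend $0$ time between $\locpriv$ and the new final location, so (non-)opacity is independent of~$\expiringBound$, which delivers both implication directions at once---precisely the control your construction lacks.
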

\begin{figure}[tb]
	{\centering
		\resizebox{\columnwidth}{!}{
		\scriptsize
		\begin{tikzpicture}[->, >=stealth', auto, node distance=2cm, thin]

			\node[location] (l0) at (-1.25, 0) {$\locinit$};
			\node[cloud, cloud puffs=15.7, cloud ignores aspect, minimum width=4cm, minimum height=2cm, align=center, draw] (cloud) at (0, 0) {$\A$};
			\node[location] (lf) at (+1.25, 0) {$\locfinal$};

			\node[location, initial] (l0') at (-7, 0) {$\locinit'$};
			\node[location] (l1) at (-5, .7) {$\loc_1$};
			\node[location] (l2) at (-5, -.7) {$\loc_2$};
			\node[location] (l3) at (-3, 0) {$\loc_3$};
			\node[location, private] (lpriv) at (+3.0, 1.7) {$\locpriv$};

			\node[location] (l4) at (-1.25, 1.7) {$\loc_4$};

			\node[location, final] (lf') at (3, 0) {$\locfinal'$};

			\path
			(lpriv) edge node[left,align=center] {$\sclockx = 0$} (lf')
			(lf) edge node[xshift=1em] {$\sclockx = 2$} (lf')

			(l0') edge node[sloped,align=center] {$\styleparam{\param_1^l} \leq \sclockx \leq \styleparam{\param_1^u}$} (l1)
			(l1) edge node[sloped,align=center] {$\styleparam{\param_2^l} \leq \sclockx \leq \styleparam{\param_2^u}$} (l3)
			(l0') edge node[sloped,below] {$\styleparam{\param_2^l} \leq \sclockx \leq \styleparam{\param_2^u}$} (l2)
			(l2) edge node[sloped,align=center,below] {$\styleparam{\param_1^l} \leq \sclockx \leq \styleparam{\param_1^u}$} (l3)
			(l3) edge node[above,align=center,xshift=-1em] {$\sclockx = 1$} node[below,align=center, xshift=-1em] {$\styleclock{\Clock} \setminus \{ \sclockx \}$\\$ \leftarrow 0$} (l0)

			(l0') edge[out=90,in=165] node[align=center,below] {$\sclockx = 2$\\$\sclockx \leftarrow 0$} (lpriv)

			(l0') edge[out=90,in=180] node[below, xshift=3em] {$\bigvee_{i}(\styleparam{\param_i^l} < \sclockx \leq \styleparam{\param_i^u})$} (l4)
			(l4)  edge node[above] {$ \sclockx=1$} node[below] {$ \sclockx \leftarrow 0$} (lpriv)
			;

		\end{tikzpicture}
	}
	}
	\caption{\AuthorVersion{Construction for the undecidability}\FinalVersion{Undecidability} of \weakfullTempOpacityText{\expiringBound} emptiness for L/U-PTAs (used in~\cref{thm:EmptinessLU})}
	\label{figure:undecidabilityLU}
\end{figure}
\begin{proof}
	We reduce from the problem of reachability-emptiness in constant time, which is undecidable for general PTAs with at least 4~clocks and 2~parameters \cite[Lemma~7.1]{ALMS22}.
	That is, we showed that, given a constant time bound~$\boundReach$, the emptiness over the parameter valuations set for which a location is reachable in exactly~$\boundReach$ time units, is undecidable.

	Assume a PTA~$\PTA$ with 2~parameters, say $\param_1$ and $\param_2$, and a target location~$\locfinal$.
	Fix $\boundReach = 1$.
	From~\cite[Lemma~7.1]{ALMS22}, it is undecidable whether there exists a parameter valuation for which~$\locfinal$ is reachable in time~$1$.

	The idea of our proof is that, as in~\cite{JLR15,ALMS22}, we ``split'' each of the two parameters used in~$\PTA$ into a lower-bound parameter ($\param_1^l$ and~$\param_2^l$) and an upper-bound parameter ($\param_1^u$ and~$\param_2^u$).
	Each construction of the form $\clock < \param_i$ (resp.\ $\clock \leq \param_i$) is replaced with $\clock < \param_i^u$ (resp.\ $\clock \leq \param_i^u$)
	while
	each construction of the form $\clock > \param_i$ (resp.\ $\clock \geq \param_i$) is replaced with $\clock > \param_i^l$ (resp.\ $\clock \geq \param_i^l$);
	$\clock = \param_i$ is replaced with $\param_i^l \leq \clock \leq \param_i^u$.
	Therefore, the PTA~$\PTA$ is exactly equivalent to our construction with duplicated parameters, provided $\param_1^l = \param_1^u$ and $\param_2^l = \param_2^u$.
	The crux of the rest of this proof is that we will ``rule out'' any parameter valuation not satisfying these equalities, so as to use directly the undecidability result of \cite[Lemma~7.1]{ALMS22}.

	Consider the extension~$\PTA'$ of~$\PTA$ given in \cref{figure:undecidabilityLU}, containing notably new locations $\locinit'$, $\locpriv$, $\locfinal'$, $\loc_i$ for $i=1,\cdots,4$, and a number of guards as seen on the figure; we assume that $\clock$ is an extra clock not used in~$\PTA$.
	The guard on the transition %
		from $\locinit'$ to~$\loc_4$ %
		stands for 2 different transitions guarded with
	$\param_1^l < \clock \leq \param_1^u$,
	and
	$\param_2^l < \clock \leq \param_2^u$,
	respectively.

	Due to the fact that $\locpriv$ must be exited in 0-time to reach~$\locfinal'$, note that, for any $\expiringBound$, the system is \weakfullTempOpaqueText{\expiringBound} iff it is \weakfullTempOpaqueText{0}.

	Let us first make the following observations, for any parameter valuation $\pval'$:
	\begin{enumerate}
		\item one can only take the upper most transition directly from~$\locinit'$ to~$\locpriv$ at time~2, \ie{} $\locfinal'$ is always reachable in time~2 via a run visiting location~$\locpriv$: $2\in\PrivDurReach{\valuate{\PTA'}{\pval'}}$;
		\item the original PTA~$\PTA$ can only be entered whenever $\param_1^l \leq \param_1^u$ and $\param_2^l \leq \param_2^u$; going from~$\locinit'$ to~$\locinit$ takes exactly 1 time unit (due to the $\clock = 1$ guard);

		\item if $\locfinal'$ is reachable by a public run (not passing through $\locpriv$), then its duration is
			necessarily exactly~2 (going through~$\PTA$);

		\item we have $ \TempSupPrivDurReach{\valuate{\PTA'}{\pval'}}{0} = \emptyset$ as any run reaching~$\locfinal'$ and visiting $\locpriv$ can only do it immediately;

		\item from \cite[Lemma 7.1]{ALMS22}, it is undecidable whether there exists a parameter valuation for which there exists a run reaching~$\locfinal$ from~$\locinit$ in time $1$, \ie{} reaching~$\locfinal'$ from~$\locinit'$ in time~$2$.
	\end{enumerate}

	Let us consider the following cases.
	\begin{enumerate}
		\item If $\param_1^l > \param_1^u$ or $\param_2^l > \param_2^u$, then due to the guards from~$\locinit'$ to~$\locinit$, %
			there is no way to reach~$\locfinal'$ with a public run; since $\locfinal'$ can still be reached for some execution times (notably ${\clock = 2}$ through the upper transition from~$\locinit'$ to~$\locpriv$), then $\PTA'$ cannot be \weakfullTempOpaqueText{0}.

		\item If $\param_1^l < \param_1^u$ or $\param_2^l < \param_2^u$, then
			one of the transitions from~$\locinit'$ to~$\loc_4$ can be taken, and $\TempInfPrivDurReach{\valuate{\PTA'}{\pval'}}{0} = \set{1,2}$.
		Moreover, $\locfinal'$ might only be reached by a public run of duration~$2$ through~$\PTA$.
		Therefore, $\PubDurReach{\valuate{\PTA'}{\pval'}}\subseteq [2,2]$.
		Therefore $\PTA'$ cannot be \weakfullTempOpaqueText{0} for any of these valuations.

		\item If $\param_1^l = \param_1^u$ and $\param_2^l = \param_2^u$, then the behavior of the modified~$\PTA$ (with duplicate parameters) is exactly the one of the original~$\PTA$.
		Also, note that the transition from~$\locinit'$ to~$\locfinal'$ via~$\loc_4$ cannot be taken.
		In contrast, the upper transition from~$\locinit'$ to~$\locpriv$ can still be taken.

		Now, assume there exists a parameter valuation for which there exists a run of~$\PTA$ of duration $1$ reaching $\locfinal$.
		And, as a consequence, $\locfinal'$ is reachable, and therefore there exists some run of duration~2 (including the 1 time unit to go from~$\locinit$ to~$\locinit'$) reaching $\locfinal'$ after passing through~$\PTA$, which is public.
		From the above reasoning, all runs reaching~$\locfinal'$ have duration~2; in addition, we exhibited a public and a secret run; therefore the modified automaton $\PTA'$ is \weakfullTempOpaqueText{0} for such a parameter valuation.

		Conversely, assume there exists no parameter valuation for which there exists a run of~$\PTA$ of duration $1$ reaching~$\locfinal$.
		In that case, $\PTA'$ is not \weakfullTempOpaqueText{0} for any parameter valuation: $\TempInfPrivDurReach{\valuate{\PTA'}{\pval'}}{0} = [2,2]$ and $2\not\in \TempSupPrivDurReach{\valuate{\PTA'}{\pval'}}{0} \cup \PubDurReach{\valuate{\PTA'}{\pval'}} = \emptyset$).
	\end{enumerate}

	As a consequence, there exists a parameter valuation~$\pval'$ for which $\valuate{\PTA'}{\pval'}$ is \weakfullTempOpaqueText{\expiringBound} iff there exists a parameter valuation~$\pval$ for which there exists a run in~$\valuate{\PTA}{\pval}$ of duration $1$ reaching $\locfinal$---which is undecidable from \cite[Lemma 7.1]{ALMS22}.

	 The undecidability of the reachability-emptiness in constant time for PTAs holds from 4 clocks and 2 parameters~\cite[Lemma~7.1]{ALMS22}.
	 Here, we duplicate the parameters (which gives 4~parameters), and we add a fresh clock~$\clock$, never reset (except from~$\locpriv$ to~$\locfinal'$);
	 however, the construction of~\cite[Lemma~7.1]{ALMS22} also uses a special clock never reset.
	 Since ours is only reset ``after'' the original~$\PTA$, we can reuse the same clock.
	 Therefore, our result holds from 4~clocks and 4~parameters.
\end{proof}

As the emptiness problems are undecidable, the computation problems are immediately intractable as well.

\begin{corollary}\label{thm:computationLU}
	The \weakfullTempOpacityText{\expiringBound} computation problem is unsolvable for L/U-PTAs with at least 4~clocks and 4~parameters.
\end{corollary}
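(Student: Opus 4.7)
The plan is to derive unsolvability of computation from the undecidability of emptiness established in \cref{thm:EmptinessLU}, via a straightforward reduction. Suppose, for contradiction, that there exists an algorithm $\mathcal{A}$ solving the \weakfullTempOpacityText{\expiringBound} computation problem for L/U-PTAs with 4~clocks and 4~parameters. Given any L/U-PTA~$\PTA$ in this class, running $\mathcal{A}$ on~$\PTA$ would return (a finite representation of) the set $S$ of parameter valuations~$\pval$ together with values of~$\expiringBound$ such that $\valuate{\PTA}{\pval}$ is \weakfullTempOpaqueText{\expiringBound}.

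From this output, the first step is to test whether $S = \emptyset$. For this step to yield a decision procedure for emptiness, the representation returned by $\mathcal{A}$ must admit an effective emptiness check. This is the only delicate point in the argument, and it is really a matter of convention rather than a true obstacle: the ``computation'' problem as stated in \cref{section:PTAs} implicitly presupposes that the synthesized set is produced in a finitely representable form (\eg{} as a finite union of polyhedra over $\Param \cup \{\expiringBound\}$, as is standard in the parametric timed automata literature), for which emptiness is trivially decidable. Under this convention, testing $S = \emptyset$ is effective.

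Combining these two steps gives a decision procedure for the \weakfullTempOpacityText{\expiringBound} emptiness problem on L/U-PTAs with 4~clocks and 4~parameters, contradicting \cref{thm:EmptinessLU}. Hence no such algorithm $\mathcal{A}$ exists, and the computation problem is unsolvable on this class. The same clock/parameter bounds are inherited directly from \cref{thm:EmptinessLU}, since the reduction does not modify the instance.
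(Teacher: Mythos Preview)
Your argument is correct and follows exactly the same reasoning as the paper, which simply states that ``as the emptiness problems are undecidable, the computation problems are immediately intractable as well'' and gives no further proof. Your additional remark about the output representation admitting an effective emptiness test makes explicit an assumption the paper leaves implicit, but otherwise the approach is identical.
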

\subsection{The full class of PTAs}

The undecidability of the emptiness problems for L/U-PTAs proved above immediately implies undecidability for the larger class of PTAs.
However, we provide below an original proof, with a smaller number of parameters.

\begin{figure}[tb]
	{\centering
		\scriptsize
		\begin{tikzpicture}[pta, node distance=2cm, thin]

			\node[location] (l0) at (-2, 0) {$\locinit$};
			\node[location] (lf) at (+1.8, 0) {$\locfinal$};
			\node[cloud, cloud puffs=15.7, cloud ignores aspect, minimum width=5cm, minimum height=2cm, align=center, draw] (cloud) at (0cm, 0cm) {$\PTA$};

			\node[location, initial] (l0') at (-3.5, 0) {$\locinit'$};
			\node[location, private] (lpriv) at (+2, -1.2) {$\locpriv$};
			\node[location, final] (lf') at (+3.5, 0) {$\locfinal'$};

			\path
			(l0') edge node[above, xshift=-.5em]{$\sclockx = 0$} (l0) %
			(lf) edge node[above, xshift=+.9em]{$\sclockx = 1$} (lf')
			;
			\path (l0') edge[out=-45,in=180] node[above] {$\sclockx = 1$} node[below,align=center]{$\sclockx \assign 0$} (lpriv);
			\path (lpriv) edge[out=0,in=-90] node[below right,align=center]{$\sclockx = 0$} (lf');

		\end{tikzpicture}
	}
	\caption{\AuthorVersion{Construction for the undecidability}\FinalVersion{Undecidability} of \weakfullTempOpacityText{\expiringBound} emptiness for PTAs (used in~\cref{thm:EmptinessPTA})}
	\label{figure:undecidability}
\end{figure}
\begin{theorem}\label{thm:EmptinessPTA}
	The \weakfullTempOpacityText{\expiringBound} emptiness problem is undecidable for general PTAs for at least 4~clocks and 2~parameters.
\end{theorem}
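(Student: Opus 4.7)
My plan is to reduce from the same undecidable source as Theorem~\ref{thm:EmptinessLU}, namely the reachability-emptiness in constant time problem for general PTAs, which is known to be undecidable already with $4$~clocks and $2$~parameters by \cite[Lemma~7.1]{ALMS22}. Given an instance $(\PTA,\locinit,\locfinal)$ with fixed time bound $\boundReach=1$, I will wrap $\PTA$ with the simple gadget depicted in \cref{figure:undecidability}: a fresh initial location $\locinit'$ with an instantaneous edge to $\locinit$ (guarded by $\sclockx=0$), an alternative edge from $\locinit'$ directly to $\locpriv$ at $\sclockx=1$ that resets $\sclockx$, an instantaneous edge from $\locpriv$ to the fresh final location $\locfinal'$ at $\sclockx=0$, and an edge from the original $\locfinal$ to $\locfinal'$ guarded by $\sclockx=1$. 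Here $\sclockx$ is the clock that the construction of \cite[Lemma~7.1]{ALMS22} already requires to be never reset inside $\PTA$, so no extra clock is needed, and no extra parameter is introduced.

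Next I will compute the relevant duration sets for any parameter valuation $\pval$. The ``short'' branch $\locinit'\to\locpriv\to\locfinal'$ is the only private run, takes exactly $1$ time unit to reach $\locfinal'$, and enters $\locpriv$ at $0$ time units before reaching $\locfinal'$; this yields $\TempInfPrivDurReach{\valuate{\PTA'}{\pval}}{\expiringBound}=\{1\}$ and $\TempSupPrivDurReach{\valuate{\PTA'}{\pval}}{\expiringBound}=\emptyset$ for every $\expiringBound\geq 0$. A public run must traverse $\PTA$ from $\locinit$ (entered at $\sclockx=0$) and take the exit edge at $\sclockx=1$; since $\sclockx$ is never reset inside $\PTA$, this forces the internal run to have duration exactly $1$. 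Hence $\PubDurReach{\valuate{\PTA'}{\pval}}\subseteq\{1\}$, with equality iff $\locfinal$ is reachable in $\PTA$ in time exactly $1$ under $\pval$.

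The key payoff is that, because $\TempSupPrivDurReach$ is empty and $\TempInfPrivDurReach=\{1\}$, both of the \tempOpacityText{\expiringBound} conditions from \cref{definition:temporary-timed-opacity} collapse to the single requirement $\{1\}\subseteq\PubDurReach{\valuate{\PTA'}{\pval}}$ (the reverse inclusion is automatic since $\PubDurReach\subseteq\{1\}$). Therefore $\valuate{\PTA'}{\pval}$ is \weakfullTempOpaqueText{\expiringBound} for some $\expiringBound$ iff there exists a run of $\valuate{\PTA}{\pval}$ of duration exactly $1$ reaching $\locfinal$, and moreover this equivalence is independent of the specific value of $\expiringBound$. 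An emptiness decision procedure for \weakfullTempOpacityText{\expiringBound} would thus decide reachability-emptiness in constant time for $\PTA$, contradicting \cite[Lemma~7.1]{ALMS22}.

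The step requiring the most care is the clock and parameter accounting: the construction visibly uses a clock $\sclockx$ that is not reset inside $\PTA$, so I must argue (as was done in the proof of \cref{thm:EmptinessLU}) that this role is already played by the dedicated ``never reset'' clock present in the original hardness construction of \cite[Lemma~7.1]{ALMS22}, so no clock is added beyond the four already there; since no parameter is introduced either, the bound of $4$~clocks and $2$~parameters is preserved. Everything else is routine case analysis on the three edges leaving $\locinit'$.
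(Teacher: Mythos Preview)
Your proposal is correct and follows essentially the same approach as the paper: the same gadget of \cref{figure:undecidability}, the same computation of the three duration sets, the same observation that the outcome is independent of~$\expiringBound$, and the same clock-sharing argument with the never-reset clock of \cite[Lemma~7.1]{ALMS22}. The only cosmetic difference is that you identify $\sclockx$ with the existing never-reset clock up front, whereas the paper first introduces it as fresh and then argues it can be merged.
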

\begin{proof}
	We reduce again from the problem of reachability-emptiness in constant time, which is undecidable for general PTAs with at least 4~clocks and 2~parameters \cite[Lemma~7.1]{ALMS22}.

	Fix $\boundReach = 1$.
	Consider an arbitrary PTA~$\PTA$, with initial location~$\locinit$ and a given location~$\locfinal$.
	We add to~$\PTA$ a new clock~$\clock$ (unused and therefore never reset in~$\PTA$), and we add the following locations and transitions in order to obtain a PTA~$\PTA'$, as in \cref{figure:undecidability}:
		a new initial location $\locinit'$, with an urgent outgoing transition to~$\locinit$, and a transition to a new location~$\locpriv$ enabled after 1~time unit;
		a new final location~$\locfinal'$ with incoming transitions from $\locpriv$ (in 0-time) and from $\locfinal$ (after 1 time unit since the system start).
	First, due to the guard ``$\clock = 0$'' from~$\locpriv$ to~$\locfinal'$, note that, for any $\expiringBound$, the system is \weakfullTempOpaqueText{\expiringBound} iff it is \weakfullTempOpaqueText{0}.
	Also note that, for any valuation, $\TempInfPrivDurReach{\valuate{\PTA'}{\pval}}{0} = [1, 1]$.
	For the same reason, note that $\TempSupPrivDurReach{\valuate{\PTA'}{\pval}}{0} = \emptyset$.
	Second, note that, due to the guard ``$\clock = 1$'' on the edge from~$\locfinal$ and~$\locfinal'$ (with $\clock$ never reset along this path), $\PubDurReach{\valuate{\PTA'}{\pval}}$ can at most contain $[1, 1]$, \ie{} $\PubDurReach{\valuate{\PTA'}{\pval}} \subseteq [1,1]$.

	Now, let us show that there exists a valuation $\pval$ such that $\valuate{\PTA'}{\pval}$ is \weakfullTempOpaqueText{0} iff there exists $\pval$ such that $\locfinal$ is reachable in $\valuate{\PTA}{\pval}$ in 1~time unit.

	\begin{itemize}
		\item[$\Rightarrow$]
		Assume there exists a valuation~$\pval$ such that $\valuate{\PTA'}{\pval}$ is \weakfullTempOpaqueText{0}.

		Recall that, from the construction of~$\PTA'$, $\TempInfPrivDurReach{\valuate{\PTA'}{\pval}}{0} = [1, 1]$.
		Therefore, from the definition of \weakfullTempOpacityText{0}, there exist runs only of duration~1 (resp.\ there exists at least a run of duration~1) reaching $\locfinal'$ without visiting~$\locpriv$.
		Since $\PubDurReach{\valuate{\PTA'}{\pval}} \subseteq [1,1]$, then $\locfinal$ is reachable in exactly 1 time unit in $\valuate{\PTA}{\pval}$.

		\item[$\Leftarrow$]
		Assume there exists~$\pval$ such that~$\locfinal$ is reachable in $\valuate{\PTA}{\pval}$ in exactly 1 time unit.
		Therefore, $\locfinal'$ can also be reached in exactly 1 time unit: hence, $\PubDurReach{\valuate{\PTA'}{\pval}} = [1,1]$.

		Now, recall that $\TempSupPrivDurReach{\valuate{\PTA'}{\pval}}{0} = \emptyset$ and $\TempInfPrivDurReach{\valuate{\PTA'}{\pval}}{0} = [1, 1]$.
		Therefore, $\TempInfPrivDurReach{\valuate{\PTA'}{\pval}}{0} = \TempSupPrivDurReach{\valuate{\PTA'}{\pval}}{0} \cup \PubDurReach{\valuate{\PTA'}{\pval}}$, which from \cref{definition:temporary-timed-opacity} means that $\valuate{\PTA'}{\pval}$ is \fullTempOpaqueText{0}.
		Trivially, we also have that $\TempInfPrivDurReach{\valuate{\PTA'}{\pval}}{0} \subseteq \TempSupPrivDurReach{\valuate{\PTA'}{\pval}}{0} \cup \PubDurReach{\valuate{\PTA'}{\pval}}$ and therefore $\valuate{\PTA'}{\pval}$ is also \weakTempOpaqueText{0}.
	\end{itemize}
	Therefore, there exists~$\pval$ such that $\valuate{\PTA'}{\pval}$ is \weakfullTempOpaqueText{0} iff $\locfinal$ is reachable in $\valuate{\PTA}{\pval}$ in 1~time unit---which is undecidable \cite[Lemma~7.1]{ALMS22}.
	As a conclusion, \weakfullTempOpacityText{\expiringBound} emptiness is undecidable.

	Concerning the number of clocks and parameters, we use the same argument as in the proof of \cref{thm:EmptinessLU}\AuthorVersion{: the undecidability of the reachability-emptiness in constant time holds from 4 clocks and 2 parameters, and we add a fresh clock~$\clock$, but which can be shared with the global clock of \cite[Lemma~7.1]{ALMS22}.
	Therefore, our construction requires 4~clocks and 2~parameters}.
\end{proof}
\begin{corollary}\label{thm:computationPTA}
	The \weakfullTempOpacityText{\expiringBound} computation problem is unsolvable for PTAs\AuthorVersion{ for at least 4~clocks and 2~parameters}.
\end{corollary}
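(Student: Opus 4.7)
The plan is to obtain \cref{thm:computationPTA} as an immediate consequence of \cref{thm:EmptinessPTA}, by a standard emptiness-from-computation reduction. Specifically, I would argue by contraposition: if the \weakfullTempOpacityText{\expiringBound} computation problem were solvable for PTAs, then one could solve the corresponding emptiness problem as a trivial post-processing step, contradicting \cref{thm:EmptinessPTA}.

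Concretely, first I would assume the existence of an algorithm~$\mathcal{C}$ that, given any PTA~$\PTA$, outputs a finite symbolic representation of the set~$S$ of pairs $(\pval, \expiringBound)$ such that $\valuate{\PTA}{\pval}$ is \weakfullTempOpaqueText{\expiringBound}. Then, given any instance~$\PTA$ of the \weakfullTempOpacityText{\expiringBound} emptiness problem, I would run~$\mathcal{C}$ on~$\PTA$ and syntactically check whether the returned representation of~$S$ is empty; this yields the answer to the emptiness problem. Since \cref{thm:EmptinessPTA} establishes that this emptiness problem is undecidable for PTAs with at least 4~clocks and 2~parameters, no such algorithm~$\mathcal{C}$ can exist for this class, proving the corollary.

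The only subtlety is that ``solvable'' must be understood as producing an output amenable to an emptiness test; this is innocuous, since any reasonable notion of computation (\eg{} a finite union of polyhedra over $\Param \cup \{\expiringBound\}$, as is standard in the PTA literature) admits a decidable emptiness check. Hence there is no genuine obstacle here: the result transfers directly, and the bounds on the number of clocks and parameters are inherited verbatim from the construction of \cref{thm:EmptinessPTA}.
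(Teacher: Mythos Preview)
Your proposal is correct and matches the paper's approach: the corollary is stated without proof, relying on the immediately preceding remark (made explicitly before \cref{thm:computationLU}) that undecidability of emptiness renders the computation problem unsolvable. Your contrapositive argument spells out precisely this inference, and the clock/parameter bounds are indeed inherited from \cref{thm:EmptinessPTA}.
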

\section{Conclusion and perspectives}\label{section:conclusion}

\AuthorVersion{\paragraph{Conclusion}}
We studied here a version of execution-time opacity where the secret has an expiration date: that is, we are interested in computing the set of expiration dates of the secret for which the attacker is unable to deduce whether the secret was visited \emph{recently} (\ie{} before its expiration date) prior to the system completion; the attacker has access only to the model and to the execution time of the system.
We considered both the full opacity (the system must be opaque for all execution times) and the weak opacity (the set of execution times visiting the secret before its expiration date is included into the set of execution times reaching the final location).
Given a known constant expiration date, the decision problems are all decidable for \AuthorVersion{timed automata}\FinalVersion{TAs}; in addition, we can effectively \emph{compute} the set of expiration dates for which the system is \emph{weakly} opaque (full opacity remains open).
However, parametric versions of these problems, with unknown timing parameters, turned to be all undecidable, including for the L/U-PTA subclass of PTAs\AuthorVersion{, previously known for some decidability results}.
\AuthorVersion{This shows the hardness of the considered problem.}

\paragraph{Summary}
We summarize our results in \cref{tab:resultsRecall}.
``\cellYes{}'' denotes decidability, while ``\cellNo{}'' denotes undecidability; ``\cellOpen{}'' denotes an open problem.

\begin{table}
	\centering
	\caption{Summary of the results}
	\begin{tabular}{|l|l|l|l|l|}
		\hline
		\multicolumn{2}{|l|}{}          & \cellHeader{Decision}                        & \cellHeader{Emptiness}                   & \cellHeader{Computation}              \\ \hline
		\multirow{2}{*}{TA}      & Weak & \cellYes \cellCref{th:TAtempop}              & \cellYes  \cellCref{th:weakEmptiness}    & \cellYes \cellCref{th:weakcomp}       \\
		                         & Full & \cellYes \cellCref{th:TAtempop}            & \cellYes \cellCref{thm:opacityEmptiness} & \cellOpen{}                           \\ \hline
		\multirow{2}{*}{L/U-PTA} & Weak & \cellYes \cellCref{rmq:PTA-decision-problem} & \cellNo \cellCref{thm:EmptinessLU}       & \cellNo \cellCref{thm:computationLU}  \\
		                         & Full & \cellYes \cellCref{rmq:PTA-decision-problem} & \cellNo \cellCref{thm:EmptinessLU}       & \cellNo \cellCref{thm:computationLU}  \\ \hline
		\multirow{2}{*}{PTA}     & Weak & \cellYes \cellCref{rmq:PTA-decision-problem} & \cellNo \cellCref{thm:EmptinessPTA}      & \cellNo \cellCref{thm:computationPTA} \\
		                         & Full & \cellYes \cellCref{rmq:PTA-decision-problem} & \cellNo \cellCref{thm:EmptinessPTA}      & \cellNo \cellCref{thm:computationPTA} \\ \hline
	\end{tabular}
	\label{tab:resultsRecall}
\end{table}

\paragraph{Perspectives}
The main theoretical future work is the open problem in \cref{tab:resultsRecall} (\fullTempOpacityText{\expiringBound} computation)\AuthorVersion{: it is unclear whether we can \emph{compute} the exact set of expiration dates~$\expiringBound$ for which a system is \fullTempOpaqueText{\expiringBound}}.

The proofs of undecidability in \cref{section:PTAs} require a minimal number of clocks and parameters.
Smaller numbers might lead to decidability.
In addition, the same proofs are based on an undecidability result (reachability emptiness in constant time~\cite[Lemma~7.1]{ALMS22})\ which uses \emph{rational}-valued parameters.
The undecidability of the emptiness problems of \cref{section:PTAs} over \emph{integer}-valued parameters\AuthorVersion{ does not follow immediately, and} remains to be shown.

While the non-parametric part can be (manually) encoded into existing problems~\cite{ALMS22} using a TA transformation in order to reuse our implementation in \imitator{}~\cite{Andre21}, the implementation of the parametric problems remains to be done.
Since the emptiness problem is undecidable, this implementation can only come in the form of a procedure without a guarantee of termination, or with an approximate result.
In addition to weak and full ET-opacity,
problems focusing on the opacity for \emph{at least} one execution time might give a different decidability or complexity; for example, we highly suspect that the complexity of \cref{th:TAtempop} would decrease in this latter situation.

\newcommand{\CCIS}{Communications in Computer and Information Science}
	\newcommand{\ENTCS}{Electronic Notes in Theoretical Computer Science}
	\newcommand{\FI}{Fundamenta Informormaticae}
	\newcommand{\FundInf}{Fundamenta Informormaticae}
	\newcommand{\FMSD}{Formal Methods in System Design}
	\newcommand{\IJFCS}{International Journal of Foundations of Computer Science}
	\newcommand{\IJSSE}{International Journal of Secure Software Engineering}
	\newcommand{\IPL}{Information Processing Letters}
	\newcommand{\JLAP}{Journal of Logic and Algebraic Programming}
	\newcommand{\JLC}{Journal of Logic and Computation}
	\newcommand{\LMCS}{Logical Methods in Computer Science}
	\newcommand{\LNCS}{Lecture Notes in Computer Science}
	\newcommand{\RESS}{Reliability Engineering \& System Safety}
	\newcommand{\STTT}{International Journal on Software Tools for Technology Transfer}
	\newcommand{\TOSEM}{ACM Transactions on Software Engineering and Methodology}
	\newcommand{\TCS}{Theoretical Computer Science}
	\newcommand{\ToPNoC}{Transactions on Petri Nets and Other Models of Concurrency}
	\newcommand{\TSE}{IEEE Transactions on Software Engineering}

\AuthorVersion{%
	\renewcommand*{\bibfont}{\small}
	\printbibliography[title={References}]
}

\end{document}